\let\oldnl\nl% Store \nl in \oldnl
\newcommand{\nonl}{\renewcommand{\nl}{\let\nl\oldnl}}% Remove line number for one line
\newcommand{\Gg}{\mathcal{G}}
\newcommand{\abs}[1]{\vert #1 \vert}
\newcommand{\neighbor}[1]{\mathcal N(#1)}
\newcommand{\Comp}{\mathbb C}
\newcommand{\poly}{\mathrm{poly}}
\newcommand{\Aa}{\mathtt{DistQWalk}}
\newcommand{\Bb}{\mathtt{Traversal}}
\newcommand{\Rr}{\mathsf{R}}
\newcommand{\Hh}{\mathcal{H}}
\newcommand{\mlength}{b}
\newcommand{\entr}{\mathtt{source}}
\newcommand{\exit}{\mathtt{target}}
\newcommand{\mess}{x}
\newcommand{\qmess}{\varphi}
\newcommand{\Roots}{\mathrm{Roots}}
\newcommand{\id}{\mathsf{id}}
\newtheorem{theorem}{Theorem}[section]
\newtheorem{proposition}{Proposition}[section]
\newtheorem{definition}{Definition}[section]
\newtheorem{lemma}{Lemma}[section]
\newtheorem{game}{Game}
\title{Exponential Quantum Advantage for Message Complexity \\in Distributed Algorithms}
 \author{
 Fran{\c c}ois Le Gall\\
 %Graduate School of Mathematics\\
 Nagoya University
 \and Ma{\"e}l Luce\\
 %Graduate School of Mathematics\\
 Nagoya University
 \and Joseph Marchand\\
 Ecole normale sup\'erieure Paris-Saclay
 \and Mathieu Roget\\
 Universit\'e Aix-Marseille}
\date{}
\begin{document}
\maketitle
\begin{abstract}
    We investigate how much quantum distributed algorithms can outperform classical distributed algorithms with respect to the \emph{message complexity} (the overall amount of communication used by the algorithm). Recently, Dufoulon, Magniez and Pandurangan (PODC 2025) have shown a polynomial quantum advantage for several tasks such as leader election and agreement. In this paper, we show an exponential quantum advantage for a fundamental task: routing information between two specified nodes of a network. We prove that for the family of ``welded trees" introduced in the seminal work by Childs, Cleve, Deotto, Farhi, Gutmann and Spielman (STOC 2003), there exists a quantum distributed algorithm that transfers messages from the entrance of the graph to the exit with message complexity exponentially smaller than 
    any classical algorithm. Our quantum algorithm is based on the recent "succinct" implementation of quantum walks over the welded trees by Li, Li and Luo (SODA 2024). Our classical lower bound is obtained by ``lifting'' the lower bound from Childs, Cleve, Deotto, Farhi, Gutmann and Spielman (STOC 2003) from query complexity to message complexity.
\end{abstract}
\thispagestyle{empty}
\newpage

%===================================================================
%===================================================================
\section{Introduction}\label{sec:intro}
%===================================================================
%===================================================================
\paragraph{Quantum distributed computing.}
While the potential of quantum distributed computing has been investigated in early works \cite{Arfaoui+14,Ben-Or+STOC05,Denchev+08,ElkinKNP14,Gavoille2009,Tani+12}, it is only recently that quantum advantage has been actively studied. 

%In (synchronous) distributed computing, executions proceed with round-based synchrony, where at in each round, each round can send a message to each of its neighbor. 
Most works have focused on the number of rounds of communication needed to solve distributed problems, which is called the \emph{round complexity}.  Two main models are considered when considering round complexity: the CONGEST model, in which each communication channel has small bandwidth, and the LOCAL model, in which each communication channel has unlimited bandwidth. 
In the CONGEST model, Le Gall and Magniez~\cite{LeGall+PODC18} showed how to compute the diameter of a network quadratically faster than any classical distributed algorithm by implementing Grover's algorithm~\cite{Grover96} in the distributed setting. This result led to the development of many quantum distributed algorithms showing a polynomial quantum advantage for many important problems \cite{AVPODC22,CensorHillel+2022,FLNP21,Fraigniaud2024,Hasegawa+PODC24,Izumi+PODC19,Izumi+STACS20,LeGall+2023,LMN23,WYPODC22}. 
In the LOCAL model, Le Gall, Nishimura and Rosmanis \cite{gall2018quantum} introduced a computational problem that can be solved by a quantum algorithm using $O(1)$ rounds of communication and showed that any classical algorithm solving the problem requires $\Omega(n)$ rounds of communication, where~$n$ is the size of the network. Refs.~\cite{A+24,B+24,A+25,CR+STOC24} have recently investigated the existence of quantum advantage for locally-checkable problems in the LOCAL model (the problem considered in \cite{gall2018quantum} is not locally-checkable). In particular, Ref.~\cite{B+24} showed that for any $\Delta\ge 3$ there exists a locally-checkable problem on a graph of maximum degree $\Delta$ that can be solved in $O(1)$ rounds in the quantum LOCAL model but requires $\Omega(\min\{ \Delta,\log_\Delta\log n \})$ rounds in the classical LOCAL model. All those quantum advantages in the LOCAL model, however, are for ``artificial'' problems.

A recent work by Dufoulon, Magniez and Pandurangan \cite{Dufoulon+PODC25} has investigated quantum advantage for another basic complexity measure in distributed computing, the \emph{message complexity}, which corresponds to the overall amount of communication used by the distributed algorithm (i.e., its communication complexity).\footnote{We note that from a purely theoretical perspective, quantum advantage in message complexity were already established by early works in two-party communication complexity (including several examples of exponential advantage \cite{Buhrman+01,RazSTOC99}) and the works \cite{A+24,A+25,LeGall+STACS19} already mentioned in the LOCAL model. The goal of research conducted in \cite{Dufoulon+PODC25} and in the present paper is rather to design quantum distributed algorithms that outperform classical distributed algorithms for concrete problems of interest to the distributed computing community.} 
For leader election in 2-diameter graphs, for instance, Ref.~\cite{Dufoulon+PODC25} presented a quantum algorithm with message complexity $\tilde O(n^{2/3})$, which beats the classical $\tilde \Theta (n)$ bound from \cite{Chatterjee+20}. We also mention a recent work investigating the tradeoff between message complexity and round complexity of quantum distributed algorithms for crash-tolerant consensus~\cite{ICALP24}.

\paragraph{Our result.} 
In this paper, we show that an exponential quantum advantage can be achieved for the message complexity of a fundamental problem.

We consider a network represented by a connected, undirected and unweighted graph, in which nodes represent processors and edges represent communication channels. In the classical setting, the nodes are classical computers and the channels are classical. In the quantum setting, the nodes are quantum computers and the channels allow quantum communication. The nodes have initially no information about the topology of the graph: each node only knows the number of nodes to which it is connected.
Nodes can can only send messages to their neighbors. 
%As usual in distributed computing, we assume that the each node has a (unique) identifier. 
 There are two special nodes labeled $\entr$ and $\exit$.  
The node labeled $\entr$ has as input some data $\mess$ consisting of $\mlength$ bits (all the other nodes have no input). The goal is to send the data $\mess$ from the node labeled $\entr$ to the node labeled $\exit$. We call this problem the \emph{point-to-point routing problem}. %We stress that initially only the node labeled $\entr$ knows that it is the source and similarly only the node labeled $\exit$ knows that it is the destination.
%We show that there exists a family of graphs $G_N$, where $N$ has nodes, in which this problem can be solved with message complexity $O(b\cdot\poly(\log N))$, while any classical distributed algorithm has message complexity $\Theta(N)$. 
This problem is difficult since the nodes do not know how to reach $\exit$ from $\entr$. The trivial algorithm will thus broadcast (using a graph traversal algorithm, e.g., depth-first search) the data $\mess$ to all the nodes of the network, which has message complexity $O(b\cdot N)$, where $N$ is the number of nodes. It is easy to see that there exists graphs (e.g., a line) for which this trivial algorithm is optimal. In this paper, we show that there exists a family of graphs in which the point-to-point routing problem can be solved with message complexity $O(b\cdot\poly(\log N))$ in the quantum setting but requires message complexity $\Omega(N)$ in the classical setting.

\begin{figure}[th]
\vspace{7mm}
\centering
\begin{tikzpicture}[scale=0.45,roundnode/.style={circle, draw=black!100, fill=black, very thick, minimum size=1mm,scale=0.5},roundnode2/.style={circle, draw=black!100, fill=blue!90, very thick, minimum size=1mm,scale=0.5},roundnode3/.style={circle, draw=black!100, fill=black!20!green, very thick, minimum size=1mm,scale=0.3}]
\newcommand\XA{2}
\newcommand\YA{0.5}
\node[roundnode,draw=black,minimum height=3mm] (d1) at (0
*\XA,15*\YA) {}; 
\node[draw=none,fill=none,] at (-1.0*\XA,15*\YA) {\small $\entr$};
\draw [thin,black] (0*\XA,15*\YA) -- (1*\XA,7*\YA){};
\draw [thin,black] (0*\XA,15*\YA) -- (1*\XA,23*\YA){};
\foreach \i in {7,23} {
    \node[roundnode,draw=black,minimum height=3mm] (d1) at (1*\XA,\i*\YA) {}; 
    \draw [thin,black] (1*\XA,\i*\YA) -- (2*\XA,\i*\YA+4*\YA){};
    \draw [thin,black] (1*\XA,\i*\YA) -- (2*\XA,\i*\YA-4*\YA){};
}
\foreach \i in {3,11,19,27} {
    \node[roundnode,draw=black,minimum height=3mm] (d1) at (2*\XA,\i*\YA) {}; 
    \draw [thin,black] (2*\XA,\i*\YA) -- (3*\XA,\i*\YA+2*\YA){};
    \draw [thin,black] (2*\XA,\i*\YA) -- (3*\XA,\i*\YA-2*\YA){};
}
\foreach \i in {1,5,9,13,17,21,25,29} {
    \node[roundnode,draw=black,minimum height=3mm] (d1) at (3*\XA,\i*\YA) {}; 
    \draw [thin,black] (3*\XA,\i*\YA) -- (4*\XA,\i*\YA+1*\YA){};
    \draw [thin,black] (3*\XA,\i*\YA) -- (4*\XA,\i*\YA-1*\YA){};
}
\foreach \i in {0,2,4,6,8,10,12,14,16,18,20,22,24,26,28,30} {
    \node[roundnode,draw=black,minimum height=3mm] (d1) at (4*\XA,\i*\YA) {}; 
}

\draw [thin,black] (4*\XA,0*\YA) -- (8*\XA,4*\YA){};
\draw [thin,black] (4*\XA,6*\YA) -- (8*\XA,4*\YA){};
\draw [thin,black] (4*\XA,6*\YA) -- (8*\XA,10*\YA){};
\draw [thin,black] (4*\XA,18*\YA) -- (8*\XA,10*\YA){};
\draw [thin,black] (4*\XA,18*\YA) -- (8*\XA,24*\YA){};
\draw [thin,black] (4*\XA,2*\YA) -- (8*\XA,24*\YA){};
\draw [thin,black] (4*\XA,2*\YA) -- (8*\XA,30*\YA){};
\draw [thin,black] (4*\XA,26*\YA) -- (8*\XA,30*\YA){};
\draw [thin,black] (4*\XA,26*\YA) -- (8*\XA,2*\YA){};
\draw [thin,black] (4*\XA,12*\YA) -- (8*\XA,2*\YA){};
\draw [thin,black] (4*\XA,12*\YA) -- (8*\XA,28*\YA){};
\draw [thin,black] (4*\XA,22*\YA) -- (8*\XA,28*\YA){};
\draw [thin,black] (4*\XA,22*\YA) -- (8*\XA,0*\YA){};
\draw [thin,black] (4*\XA,30*\YA) -- (8*\XA,0*\YA){};
\draw [thin,black] (4*\XA,30*\YA) -- (8*\XA,22*\YA){};
\draw [thin,black] (4*\XA,14*\YA) -- (8*\XA,22*\YA){};
\draw [thin,black] (4*\XA,14*\YA) -- (8*\XA,26*\YA){};
\draw [thin,black] (4*\XA,4*\YA) -- (8*\XA,26*\YA){};
\draw [thin,black] (4*\XA,4*\YA) -- (8*\XA,14*\YA){};
\draw [thin,black] (4*\XA,28*\YA) -- (8*\XA,14*\YA){};
\draw [thin,black] (4*\XA,28*\YA) -- (8*\XA,20*\YA){};
\draw [thin,black] (4*\XA,10*\YA) -- (8*\XA,20*\YA){};
\draw [thin,black] (4*\XA,10*\YA) -- (8*\XA,6*\YA){};
\draw [thin,black] (4*\XA,8*\YA) -- (8*\XA,6*\YA){};
\draw [thin,black] (4*\XA,8*\YA) -- (8*\XA,12*\YA){};
\draw [thin,black] (4*\XA,20*\YA) -- (8*\XA,12*\YA){};
\draw [thin,black] (4*\XA,20*\YA) -- (8*\XA,18*\YA){};
\draw [thin,black] (4*\XA,16*\YA) -- (8*\XA,18*\YA){};
\draw [thin,black] (4*\XA,16*\YA) -- (8*\XA,8*\YA){};
\draw [thin,black] (4*\XA,24*\YA) -- (8*\XA,8*\YA){};
\draw [thin,black] (4*\XA,24*\YA) -- (8*\XA,16*\YA){};
\draw [thin,black] (4*\XA,0*\YA) -- (8*\XA,16*\YA){};

\foreach \i in {0,2,4,6,8,10,12,14,16,18,20,22,24,26,28,30} {
    \node[roundnode,draw=black,minimum height=3mm] (d1) at (8*\XA,\i*\YA) {}; 
}
\foreach \i in {1,5,9,13,17,21,25,29} {
    \node[roundnode,draw=black,minimum height=3mm] (d1) at (9*\XA,\i*\YA) {}; 
    \draw [thin,black] (9*\XA,\i*\YA) -- (8*\XA,\i*\YA+1*\YA){};
    \draw [thin,black] (9*\XA,\i*\YA) -- (8*\XA,\i*\YA-1*\YA){};
}
\foreach \i in {3,11,19,27} {
    \node[roundnode,draw=black,minimum height=3mm] (d1) at (10*\XA,\i*\YA) {}; 
    \draw [thin,black] (10*\XA,\i*\YA) -- (9*\XA,\i*\YA+2*\YA){};
    \draw [thin,black] (10*\XA,\i*\YA) -- (9*\XA,\i*\YA-2*\YA){};
}
\foreach \i in {7,23} {
    \node[roundnode,draw=black,minimum height=3mm] (d1) at (11*\XA,\i*\YA) {}; 
    \draw [thin,black] (11*\XA,\i*\YA) -- (10*\XA,\i*\YA+4*\YA){};
    \draw [thin,black] (11*\XA,\i*\YA) -- (10*\XA,\i*\YA-4*\YA){};
}
\node[roundnode,draw=black,minimum height=3mm] (d1) at (12
*\XA,15*\YA) {}; 
\node[draw=none,fill=none,] at (13.0*\XA,15*\YA) {\small $\exit$};
\draw [thin,black] (12*\XA,15*\YA) -- (11*\XA,7*\YA){};
\draw [thin,black] (12*\XA,15*\YA) -- (11*\XA,23*\YA){};

\draw [thick,black,<->] (-0.3*\XA,-2*\YA) -- (4.3*\XA,-2*\YA){};
\node[draw=none,fill=none,] at (2*\XA,-3.5*\YA) {height $n$};
\draw [thick,black,<->] (7.7*\XA,-2*\YA) -- (12.3*\XA,-2*\YA){};
\node[draw=none,fill=none,] at (10*\XA,-3.5*\YA) {height $n$};
\end{tikzpicture}
\caption{An instance of welded trees in $\Gg_{n}$ for $n=4$.}\label{fig1}
\end{figure}
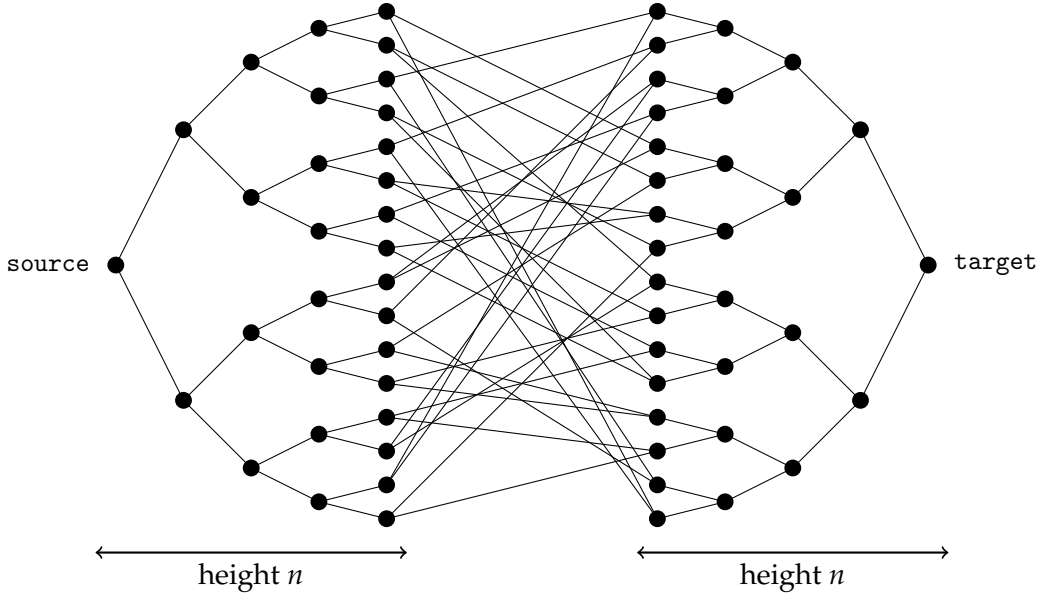

% Most relevant works: continuous-time quantum walks \cite{Childs+STOC03,Fenner+03}, discrete multi-dimensional quantum walk \cite{Jeffery+STOC23}, discrete traditional quantum walks \cite{Li+24}.

%\paragraph{Statement of our separation.}
The family of graphs we consider is the family of welded trees introduced in \cite{Childs+STOC03} to show an exponential advantage of quantum-walks-based algorithms and then further studied in, e.g., \cite{Altia+03,Belovs24,Fenner+03,Jeffery+STOC23,Li+24}. This family, indexed by a positive integer $n$, is denoted by $\Gg_n$.
An instance of welded trees in $\Gg_n$ for $n=4$ is represented in \cref{fig1}. A graph in $\Gg_n$ has $2^{n+2}-2$ nodes. It consists of two balanced binary trees of height $n$, whose roots are labeled $\entr$ and $\exit$, respectively, connected together by an arbitrary cycle that alternates between the leaves of the two trees such that every leaf on the left is connected to two leaves on the right (and vice versa).\footnote{Such a cycle can be constructed by choosing a leaf on the left and connecting it to a leaf on the right. The latter is then connected to a leaf on the left chosen among the remaining ones. This process continues, alternating sides, until obtaining a cycle in which every leaf on the left is connected to two leaves on the right (and vice versa).}
%As described above, given a network with topology corresponding to the welded tree $\Gg_n$, the goal is to send a message from the node $\entr$ to the node $\exit$. 
Here are our main results:
\begin{theorem}\label{main_thm1}
There exists a quantum distributed algorithm that solves with probability at least $1-1/\exp(n)$ the point-to-point routing problem in $\Gg_n$ with message complexity $O(b\cdot\poly(n))$. This upper bound holds even when $\mess$ is a quantum data of $b$ qubits if $\entr$ has several copies of it.
\end{theorem}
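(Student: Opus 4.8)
The plan is to implement, in the distributed setting and with only $O(b\cdot\poly(n))$ communication, the succinct welded-tree quantum walk of Li, Li and Luo~\cite{Li+24} (which builds on the continuous-time walk of~\cite{Childs+STOC03}; see also~\cite{Jeffery+STOC23}), and then to amplify its success probability by repetition. The key idea is to run the quantum walk not as a joint superposition held across all nodes, but as a single quantum \emph{token} --- one quantum particle whose position ranges over the vertices of $\Gg_n$ in superposition --- that physically travels through the network, carrying $x$ (or $\qmess$) along in an internal data register.

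\paragraph{The traveling token.}
The token consists of a coin register of dimension equal to the maximum degree (hence $O(1)$ for welded trees), a $\poly(n)$-qubit workspace (a hop counter and similar bookkeeping), and a data register holding $x$. Each node keeps only a constant-size local slot recording whether, in a given branch of the superposition, the token is currently present at that node. One step of the quantum walk is executed as follows: every node applies the walk's coin operator --- which on $\Gg_n$ is the same Grover diffusion over the incident ports at every internal vertex (and a fixed two-dimensional operator at the degree-$2$ roots), so that it is \emph{oblivious}: it refers to no global vertex labels, only to the node's own degree and local port numbering --- to the coin register of the token it holds, if any; the node then forwards that token along the port selected by the (now superposed) coin value and, symmetrically, receives any incoming token, recording in its coin register the port through which it arrived. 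Consequently one walk step transmits a single message of $O(b+\poly(n))$ qubits along one edge in every branch, so its message complexity is $O(b+\poly(n))$. The exponentially large support of the walk's state is irrelevant to this count: it is merely distributed, locally-held quantum memory, and only communication is charged.

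\paragraph{Correctness, amplification, and quantum inputs.}
Node $\entr$ prepares the token in the state ``token at $\entr$'', with the coin initialized as prescribed and the data register set to $x$, and the network runs $T=\poly(n)$ walk steps. By~\cite{Childs+STOC03,Li+24} (see also~\cite{Jeffery+STOC23}), a measurement at $\exit$ then finds the token there with probability at least $1/\poly(n)$; since every walk operator acts as the identity on the data register, on success $\exit$ holds the token together with $x$. Repeating the whole procedure $\poly(n)$ times --- $\exit$ keeping the first run in which it observes the token, and $\entr$ launching all runs obliviously, with no feedback to $\entr$ required --- boosts the success probability to $1-1/\exp(n)$. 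For a quantum input $\qmess$ the protocol is literally unchanged, as all operations commute with (in fact ignore) the data register; each run consumes one of the copies of $\qmess$ held by $\entr$, which is why several copies are needed. Multiplying the $\poly(n)$ runs by the $\poly(n)$ steps per run by the $O(b+\poly(n))$ qubits per step yields total message complexity $O(b\cdot\poly(n))$. (A coherent subprocedure, when available, would instead allow amplitude amplification --- whose two reflections are cost-free local controlled phases applied by $\entr$ and $\exit$ --- and reduce the number of copies, but plain repetition already suffices for the statement.)

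\paragraph{Where the difficulty lies.}
The step that needs care is the first one: arguing that a quantum walk whose state is spread over all $2^{\Theta(n)}$ vertices of $\Gg_n$ can nevertheless be driven one edge at a time, so that the only charged resource --- communication --- sees just a single token hopping, in superposition, along one edge per step. This phenomenon has no classical analogue: a classical walker needs $2^{\Omega(n)}$ steps to reach $\exit$, and any faster classical strategy must touch an exponential fraction of the graph, which is precisely what our classical lower bound exploits. The secondary technical point is to confirm that the succinct walk of~\cite{Li+24} is implementable by purely local per-node operations in the distributed model --- where, unlike in the query model, vertices carry no global identifiers --- which follows from the obliviousness of the Grover coin together with the fact that the only degree-$2$ vertices are the two roots, so every node can determine which case of the walk operator to apply from its degree alone.
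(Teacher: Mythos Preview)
Your proposal is correct and follows essentially the same approach as the paper: both implement the Li--Li--Luo walk as a single quantum ``token'' carrying the data through the network in superposition, with the oblivious Grover coin applied locally at each node, and then amplify by plain repetition. The paper's concrete implementation encodes the coin \emph{positionally} (by which of a node's $\deg(u)$ reception registers currently holds the data, rather than via an explicit coin register inside the token) and handles the fact that the good running time $\hat T$ is only guaranteed to exist somewhere in $[2n,\,3.6n\log n]$ by looping over all values in that range --- a point you glossed over --- but these are presentational details that do not change the argument.
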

\begin{theorem}\label{main_thm2}
For any $q\in[2^{-b},1]$,
any classical distributed algorithm that solves the point-to-point routing problem in $\Gg_n$ with probability at least $q$ has message complexity $\Omega((q-\frac{1}{2^b})^{1/5}\cdot 2^{n/20})$. This lower bound holds even if the nodes share prior randomness. 
\end{theorem}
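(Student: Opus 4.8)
\textbf{Proof idea for \cref{main_thm2}.}
The plan is to ``lift'' the classical query lower bound of \cite{Childs+STOC03} for finding the exit of a welded tree to a message-complexity lower bound, combining a simulation argument with an information-theoretic observation about the randomness of $\mess$. First I would fix a hard distribution over instances: draw the welding of $\Gg_n$ uniformly at random (together with the random port/identity labelling used by \cite{Childs+STOC03}) and draw $\mess\in\{0,1\}^b$ uniformly and independently of everything else. By an averaging (Yao-type) argument over the shared random string, it then suffices to lower bound the message complexity of a \emph{deterministic} distributed algorithm $\mathcal A$ that solves point-to-point routing with probability at least $q$ over this distribution. (If message complexity is measured in expectation rather than in the worst case, a preliminary Markov step reduces to the worst-case setting at the cost of constant factors, which only affects the constants below.)

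Next comes the information-theoretic step. Consider an execution of $\mathcal A$ on a random instance. Since $\mess$ is uniform and independent of the graph and of the shared randomness, if no information causally derived from $\mess$ ever reaches $\exit$, then the output of $\exit$ is independent of $\mess$ and equals $\mess$ with probability exactly $2^{-b}$; hence
\[
q \;\le\; \Pr[\,\text{some message causally influenced by }\mess\text{ reaches }\exit\,]\;+\;2^{-b}.
\]
A message causally influenced by $\mess$ can only be produced by $\entr$ or, inductively, by a node that has already received such a message, so the event above forces the existence of a ``causal chain'' of messages $\entr=v_0,v_1,\dots,v_\ell=\exit$ traversing edges of $\Gg_n$; in particular $\exit$ is \emph{discovered} by the communication pattern of the execution. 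Writing $E$ for the event that $\exit$ is discovered, we get $\Pr[E]\ge q-2^{-b}$.

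It then remains to bound $\Pr[E]$ for an algorithm with message complexity $M$. I would simulate $\mathcal A$ by an adaptive classical query algorithm interacting with the edge oracle of \cite{Childs+STOC03}: maintain the set of nodes that have so far sent or received a message (initially only $\entr$, the unique degree-$2$ vertex carrying an input), and whenever a node is instructed to send along a not-yet-used port, query the oracle for the other endpoint and add it to the set. Because the network is anonymous and port-numbered with a fixed shared string, any node that wakes up spontaneously and never receives a message behaves identically to all other such nodes of its degree, so for message complexity $M=2^{o(n)}$ all communication is confined to the component(s) touched by at most $2M$ nodes; the simulation therefore issues $O(M)$ queries, and the instance it sees is distributed exactly as in the query model of \cite{Childs+STOC03}. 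Event $E$ is precisely the event that this query algorithm finds the exit, so the classical lower bound of \cite{Childs+STOC03} in quantitative form --- a $T$-query classical algorithm finds the exit with probability at most $\poly(T)\cdot 2^{-\Omega(n)}$ --- gives $q-2^{-b}\le\poly(M)\cdot 2^{-\Omega(n)}$, and solving for $M$ yields $M=\Omega\big((q-2^{-b})^{1/5}\cdot 2^{n/20}\big)$; the exponents $1/5$ and $1/20$ are exactly what the polynomial degree in the \cite{Childs+STOC03} bound and the $O(M)$ simulation overhead produce (with room to spare).

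The delicate part is the ``lifting'' itself: one must argue rigorously that a distributed algorithm on an \emph{anonymous}, port-numbered network whose nodes all wake up simultaneously and perform unbounded local computation can nonetheless be simulated, using only $O(M)$ oracle queries and without distorting the instance distribution, by an adaptive query algorithm of the type to which the \cite{Childs+STOC03} lower bound applies --- in particular handling spontaneously-active nodes, nodes discovered via ``cycle-closing'' messages, and the possibility that $\exit$ is reached without any node realizing it. The accompanying causal-influence argument (that beating probability $2^{-b}$ forces $\exit$ to be discovered) is conceptually straightforward but also needs care, since messages may carry only partial information about $\mess$ and nodes may combine it with the shared randomness; formalizing ``causal influence'' through a hybrid over the bits of $\mess$, or via a standard transcript-independence argument, is the clean way to make it rigorous.
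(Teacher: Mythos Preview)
Your information-theoretic step (that beating probability $2^{-b}$ forces the communication pattern to contain a path from $\entr$ to $\exit$) is essentially the paper's Proposition~\ref{prop:red1}, and is fine. The gap is in the simulation step.

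You assume the network is \emph{anonymous} and conclude that ``any node that wakes up spontaneously and never receives a message behaves identically to all other such nodes of its degree.'' But in the model of this paper every node has a distinct identifier from a polynomial-size domain, and the algorithm sees its own identifier. A deterministic algorithm can therefore instruct, say, the $M$ nodes with the smallest identifiers to start sending in round~1, while all others stay silent; under the hard distribution (random identifier assignment), this launches $M$ independent explorations rooted at \emph{uniformly random} vertices of $G$, in addition to the one rooted at $\entr$. Your simulation, which tracks only the connected component grown from $\entr$ and issues $O(M)$ queries there, simply does not model these other explorations, and the single-entrance query lower bound of \cite{Childs+STOC03} says nothing about them. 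Concretely, two such explorations could meet in the middle of the graph and jointly span a $\entr$--$\exit$ path even though neither alone reaches far; ruling this out is new work, not a corollary of \cite{Childs+STOC03}.

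This is precisely the difficulty the paper isolates and solves. Its reduction (Propositions~\ref{prop:red2}--\ref{prop:red3}) leads to a game in which the player plants up to $t{+}2$ trees, two rooted at $\entr$ and $\exit$ and the rest at uniformly random vertices, and wins if any two embedded trees intersect or any embedding is improper. The core of the lower bound is then a set of combinatorial lemmas (Lemmas~\ref{l:unif}--\ref{l:5}) showing that randomly rooted, randomly embedded trees of size $t$ are unlikely to collide anywhere in the welded tree: near the middle because columns are wide, and near the ends because reaching them from a random root is itself exponentially unlikely. This multi-root collision analysis is the genuinely new ingredient beyond \cite{Childs+STOC03}, and your proposal does not supply it.
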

For $b=\poly(n)$, Theorems \ref{main_thm1} and \ref{main_thm2} imply an exponential separation between the quantum and classical communication complexities.
%We stress that this result does not work for any network, but only for the very specific network topology described below. This nevertheless exhibits the potential of quantum communication for a new kind of task in distributed computing.
While this separation holds only for a very specific network topology (the graphs in $\Gg_n$), this nevertheless exhibits the potential of quantum communication for a new kind of task in distributed computing.

\paragraph{Technical overview of the quantum upper bound (\cref{main_thm1}).}
The goal is to produce a quantum algorithm that solves the problem with polynomial message complexity. This is done by combining three ideas: (i) there exists a quantum walk that is defined on any graph and that traverses the welded trees in polynomial time step; (ii) it is possible to reproduce this quantum walk in the distributed quantum setting and (iii) reproducing this quantum walk has polynomial message complexity.

As already mentioned, there exist several types of quantum walks known to traverse the welded trees in polynomial time \cite{Altia+03,Belovs24,Childs+STOC03,Jeffery+STOC23,Li+24}. Most of them however seem difficult to implement in the distributed setting. First of all, implementing directly continuous-time quantum walks \cite{Altia+03,Childs+STOC03,Fenner+03} does not seem feasible since the distributed setting is intrinsically discrete. While continous-time quantum walks can be generically converted to discrete-time walks \cite{Childs10}, this discretization only leads to a quantum circuit implementing the walk in a discrete number of steps. It is unclear how to efficiently implement this circuit in the distributed setting, where each node can only interact with its neighbors and perform local operations. A similar difficulty arises when trying to implement the recent discrete-time quantum walks for the welded trees by Belovs \cite{Belovs24} and Jeffery and Zur \cite{Jeffery+STOC23}. Fortunately, the recent ``succinct'' discrete-time quantum walk by Li, Li and Luo \cite{Li+24} is more suited to a distributed implementation.

Implementing the discrete-time quantum walk from \cite{Li+24} in the distributed setting with polynomial message complexity nevertheless requires new insights. Let us first explain the approach from~\cite{Li+24} (more details are given in \cref{rw}). Let $G=(V,E)$ be a welded tree. The quantum walk is a discrete-time quantum process defined
over the vector space
\[
\mathtt{Span}\left\{ \ket{u}\ket{v}\mid u\in V,\, v \in \neighbor{u}\right\}
\]
in which the state of the system at Step $t$ is of the form 
\begin{equation}\label{eq:istate}
    \ket{\Phi(t)} = \sum_{u\in V}\sum_{v \in \neighbor{u}} \Phi_{u,v}(t)\ket{u}\ket{v},
\end{equation}
where $\neighbor{u}$ denotes the set of neighbors of $u$ (similarly, we use $\deg(u)$ to denote the degree of $u$).
One step of the walk is implemented by applying
the evolution operator $U = SC$, where $S$ is the operator that swaps the two registers, and 
\[
    C = \sum_{u\in V} \ket{u}\bra{u}\otimes G_u
\]
is the coin operator (here $G_u$ is a Grover diffusion operator acting on the second register). 

Our key idea to implement this walk in the distributed setting and send the data $\mess$ from $\entr$ to $\exit$ is to replace the state of \cref{eq:istate} by 
\[
    \sum_{u\in V}\sum_{v \in \neighbor{u}} \Phi_{u,v}(t)\ket{\mathcal{R}_{u\gets v}},
\]
where $\ket{\mathcal{R}_{u\gets v}}$ is a state (formally defined in \cref{sec:ub}) of the form 
\begin{equation}\label{eq:pir}
    \ket{\mathcal{R}_{u\gets v}} = \underbrace{\ket{\bot}\cdots\ket{\bot}\ket{\mess}\ket{\bot}\cdots\ket{\bot}}_{\deg(u) \textrm{ registers}}
\end{equation}
in which the position of $\ket{\mess}$ corresponds to the port number of $u$ that links to $v$ (the notion of port number is defined in \cref{sub:dc}). Here $\ket{\bot}$ is a ``dummy'' quantum state of the same dimension as $\ket{x}$ indicating the absence of message. The key insight is that we can implement the coin operator used in \cite{Li+24} by permuting the $\deg(u)$ registers of $\ket{\mathcal{R}_{u\gets v}}$ and adding a phase, and that such an implementation can be done mostly locally, by simply sending one message per iteration of the quantum walk.
The number of messages is thus linear in the number of iterations of the original quantum walk, giving $\poly(n)$ messages in total. Each message consists of one of the registers of \cref{eq:pir}, which leads to an overall message complexity of $O(b\cdot \poly(n))$ qubits.
In fact, our implementation allows us to even send a quantum state $\ket{\varphi}$ from $\entr$ to $\exit$ simply by replacing the classical data $\ket{x}$ by $\ket{\varphi}$ in the implementation.

%In addition to these central ideas, technical details such as ensuring a high probability of success, calculating the correct hitting time, or assuming that $n$ is not known are considered. They are essentially solved by repeating the algorithm a polynomial number of times with various parameters. %This algorithmic scheme consider the problem where the message to route is quantum in nature. However, it is also possible to send a quasi-classic state to $\exit$ and measure it. This essentially allow for a classic message to be sent as well. In both cases, the message complexity increases linearly with the length of the message to route.

\paragraph{Technical overview of the classical lower bound (\cref{main_thm2}).}
The proof of the exponential classical complexity follows a path similar to that of~\cite{Childs+STOC03}. However, the distributed setting brings many more challenges. Indeed, having agents that can each try independently to help connect $\entr$ to $\exit$ brings more versatility to the algorithm. 

In~\cite{Childs+STOC03}, initially, the player only knows the name of $\entr$, which leaves it no choice but to either guess the names of the other nodes, which is unlikely as the possibilities exponentially outnumber the nodes, or to start a single exploration from $\entr$ in hope of finding $\exit$. In the distributed setting, every node is aware of itself and does not need to be discovered by a centralized entity before starting to explore the graph, which in this case means starting to propagate messages so that the network collectively learns its own structure. For example, it could be that a node in the middle of the network starts an exploration that links $\entr$ to $\exit$. In fact, any number of nodes can start an exploration in parallel from the first round.

Here is the outline of our proof, accounting for this greater power of the distributed algorithm. First, a way for a centralized player to simulate a distributed algorithm for the point-to-point routing problem is exhibited. Similarly to the centralized problem of~\cite{Childs+STOC03}, the player plays through an oracle that, given an identifier and a port number, returns the corresponding neighbor in the graph. But in this version of the centralized problem, instead of only that of $\entr$, the player knows every identifier in the network, it only ignores how nodes neighbor one another. In this way, one query to the oracle exactly accounts for one message being forwarded from the input of the oracle to the output, and the player learns of the edge between the nodes as the nodes themselves would thanks to the message. Moreover, every identifier being known by the player from the beginning simulates the fact that any node can independently start messaging from the beginning.

As in~\cite{Childs+STOC03}, an equivalence between an exploration of the graph through the use of the oracle and the random embedding of a tree in the graph is established. Still, the possibility for several independent start points of exploration further complicates things. Instead of embedding one tree rooted in $\entr$, the player embeds a number of trees, only bounded by the number of messages in the distributed algorithm and rooted anywhere in the graph. For one, some care is needed in showing the equivalence to embedding the trees in order to bring back some parallelization: the choice for the trees and their embeddings are totally independent while the centralized player, sequentially calling the oracle, could totally condition its different explorations. But the main difficulty is the following one. It is no longer sufficient to show that a tree rooted in the left part of the network cannot reach the rightmost quarter of columns as other trees rooted in the right part could help bridge the gap. The unlikelihood to be shown is then that any pair of trees has low probability of intersecting each other. As a tree has overwhelming chance of being rooted close to the middle of the network, the same way a node independently deciding to message has overwhelming chance of turning out to be close to the middle, it is still true that reaching the left- and right-most quarters of columns is unlikely. From this point it remains to be shown that trees intersecting close to the middle is also unlikely. This time the argument has to be vertical instead of horizontal, i.e., it is no longer a matter of trees reaching the same column, but a matter of trees reaching the same node, knowing that they reach the same column. It is thus shown that a random embedding of a tree infers that any of its nodes has equal probability of being embedded to any node inside a column. The columns close to the middle being of huge size, any two nodes uniformly drawn in one should not collide.

\section{Preliminaries}\label{sec:prelim}
%===================================================================
%===================================================================
In this paper all graphs are undirected and unweighted. The edges are denoted as $\{u,v\}$, but we will also use the notation $(u,v)$ when we need to order the two extremities. As already mentioned in the introduction, for a graph $G=(V,E)$ and a vertex $u\in G$, we denote by $\deg_G(u)$ its degree in $G$ and by $\mathcal{N}_G(u)$ the set of its neighbors in $G$ (we often omit the subscript $G$ when there is no possibility of confusion). For any positive integer $r$, we write $[r]=\{1,\ldots,r\}$.

We discuss below notions of distributed computing and quantum walks needed for this work.

\subsection{Distributed computing}\label{sub:dc}
We use the model introduced in \cref{sec:intro}, which is standard in distributed computing. In this subsection we first give a few additional technical details, and then define formally the message complexity of classical and quantum distributed algorithms.

The topology of the network is represented by a graph $G = (V, E)$.
All links and nodes (corresponding to the edges and vertices of $G$, respectively) are reliable and suffer no faults. Each node has a distinct identifier from a domain~$I$ of size $\poly(|V|)$. We denote by $\mathsf{ID}\subseteq I$ the set of all identifiers of the nodes. As specified in \cref{sec:intro}, initially each node knows nothing about the topology of the network except its identifier and its degree. The only exception is $\entr$ and $\exit$, who know that they are, respectively, $\entr$ and $\exit$ (but nothing else).
We stress that nodes initially do not even know their neighbors: we are actually working in the so-called port-numbering model in which for each node $u$ there exists a bijection $p_u\colon\{1,\ldots,\deg(u)\}\to\{v\in V\:|\:\{u,v\}\in E\}$ that is used when sending messages (see details below).

Executions proceed with round-based synchrony: in each round, each node performs local computation based on information it has, prepares messages to be transmitted to its neighbors and send them using the port numbers (we explain below in more details how this is done). As usual in distributed computing, the whole algorithm stops when one of the nodes of the network decides to stop. For instance, for the point-to-point routing problem, our quantum algorithm will stop when the node labeled $\exit$ outputs ``success''. %the data $\mess$.

Since our goal is to prove a separation between quantum and classical distributed algorithms, in the classical randomized setting we allow the nodes to share prior randomness (our classical lower bound works for this case), while in the quantum setting we do not assume that the nodes have any prior shared randomness or shared entanglement (our quantum upper bound works for this case). Note that we do not assume that the nodes know the number of nodes of the graph (but the classical lower bound of \cref{main_thm2} will actually apply to the case where the nodes know $\abs{V}$ as well). %and share prior randomness \fnote{Is it true?}.

\paragraph{Classical deterministic message complexity.}
At the end of each round, node $u$ chooses one message $m_{u\to i}\in\{0,1\}^{\ast}$ for each $i\in [\deg(u)]$. 
%and node $u$ can only access its neighbors indirectly via~$p_u$.
 We use $\perp$ to denote the empty message. The message $m_{u\to i}$ is then sent to node $p_u(i)$ if $m_{u\to i}\neq\perp$ . Nothing is sent from $u$ to $p_u(i)$ if $m_{u\to i} = \perp$. The classical deterministic message complexity of a round is the sum, over all nodes, of the number of bits sent at (the end of) this round. The message complexity of the protocol is the sum of the message complexities of all rounds. For clarity, we will very often write $m_{u\to v}$ instead of $m_{u\to i}$, with $v=p_u(i)$. However, we stress that $u$ does not explicitly know $v$.

%\paragraph{Classical distributed algorithms and message complexity: technical details.}
%Let us now discuss more precisely how to send messages (since we will need this level of details to introduce the quantum version of message complexity).   

%At some time, the configuration of the system is the concatenation of the local data, local memory, and reception/emission registers of all nodes of the network.

\paragraph{Deterministic configuration of a classical system.} In order to define the quantum message complexity, we will need an even more precise definition of the sending operation. We assume that each node $u$ has $2\deg(u)+1$ registers: a local register $\mathsf{D}_u$ in which the local data (including the node's identifier and possible input) and local memory are stored, an emission register $\mathsf{E}_{u\to v}$ for each $v\in\neighbor{u}$, and a reception register $\mathsf{R}_{u\leftarrow v}$ for each $v\in\neighbor{u}$.\footnote{Again, since $u$ does not know the bijection $p_u$, formally we should rather write $\mathsf{E}_{u\to i}$ and $\mathsf{R}_{u\gets i}$, with $v=p_u(i)$. For clarity we nevertheless prefer to use $\mathsf{E}_{u\to v}$ and $\mathsf{R}_{u\gets v}$.}
The configuration of the system is the concatenation of all those registers for all the nodes of the network. 

Just before sending the message, the contents of $\mathsf{E}_{u\to v}$ is the (possibly empty) message $m_{u\to v}$,
while the contents of $\mathsf{R}_{v\gets u}$ is $\perp$ for preparing for the potential reception. Just after sending the messages (at the very end of the round), the contents of $\mathsf{E}_{u\to v}$ becomes $\perp$ since the message has been delivered, while the contents of $\mathsf{R}_{v\gets u}$ becomes $m_{u\to v}$. This means that sending a message corresponds to swapping the contents of Registers $\mathsf{E}_{u\to v}$ and $\mathsf{R}_{v\gets u}$, for all pairs $(u,v)\in E$.
%$u\in G$ and all $v\in\neighbor{u}$.
%Note that this means that the contents of $\mathsf{R}_{u\gets v}$ is $m_{v\to u}$ for each $v\in\neighbor{u}$. 
With these definitions, the message complexity of one round, which we also refer to as the message complexity of the corresponding configuration, is the sum of the sizes of the contents of all emission registers that do not contain $\perp$ just before sending the messages. Note that since registers containing $\perp$ are not counted, empty messages have no impact on the message complexity (which agrees with the interpretation in which empty messages were not sent).

\paragraph{Classical randomized message complexity.}
In the classical randomized setting, each node can access an infinite sequence of local random bits and the nodes can initially share randomness. Once the randomness has been fixed, the algorithm becomes deterministic. The message complexity is defined as the maximum over all random bits (including the prior shared randomness) of the message complexity of the corresponding deterministic algorithm. %\fnote{If our classical lower bound holds for the expected message complexity we can change this definition.}

Note that the system’s conﬁguration of a classical randomized algorithm becomes a probability distribution over every possible deterministic conﬁguration. The transition from one conﬁguration to another is done according to a stochastic transformation, made of two steps: (1) Perform the same local stochastic operation to each node on their registers (the operation depends on the local random bits); (2) Send the messages prepared in Step (1) by swapping the contents of Registers $\mathsf{E}_{u\to v}$ and $\mathsf{R}_{v\gets u}$, for all pairs $(u,v)\in E$.

%The randomized message complexity of one round is the maximum message complexity of a deterministic classical conﬁguration appearing with non-zero probability. The overall randomized message complexity is the sum of the complexities of each round.

\paragraph{Quantum message complexity.}
In the quantum setting, messages now consist of qubits.  As in \cite[Appendix~A]{Dufoulon+PODC25}, we implement quantum distributed algorithms by replacing $\mathsf{D}_u$, $\mathsf{E}_{u\to v}$ and $\mathsf{R}_{v\to u}$ by quantum registers.\footnote{A slight difference is that in Ref.~\cite{Dufoulon+PODC25} messages are sent at the beginning of a round. In our case, we assume that that the messages are sent at the end of a round since this will be more convenient to describe our quantum algorithm in \cref{sec:ub}.} This means that at every time, the system’s conﬁguration is a quantum superposition of every possible deterministic conﬁguration (we use a special quantum state $\ket{\bot}$, orthogonal to all other possible quantum messages, to encode the empty message). The transition from one conﬁguration to another is done according to a unitary transformation, made of two steps: (1) Perform the same local unitary to each node on their registers; (2) Send the messages prepared in Step (1) by swapping the contents of Registers $\mathsf{E}_{u\to v}$ and $\mathsf{R}_{v\gets u}$, for all pairs $(u,v)\in E$. 
%$u\in G$ and all $v\in\neighbor{u}$.

As in \cite{Dufoulon+PODC25}, we define the quantum message complexity of one round as the maximum message complexity of a deterministic classical conﬁguration appearing with non-zero amplitude, and the overall quantum message complexity as the sum of the complexities of each round.

\subsection{Quantum walks}\label{rw}
The proof of the quantum upper bound relies on a quantum walk called the Grover walk, which can be defined on any undirected graph. Its definition uses the following $d\times d$ unitary operator called the Grover operator:
$$
G_d = \left(\begin{matrix}
   \frac{2}{d}-1 & \frac{2}{d} & \ldots & \frac{2}{d} \\
   \frac{2}{d} & \ddots & \ddots & \vdots \\
   \vdots & \ddots & \ddots & \frac{2}{d} \\
   \frac{2}{d} & \ldots & \frac{2}{d} & \frac{2}{d}-1 \\
\end{matrix}\right)
= 2\ket{s}\bra{s} - I_d,
$$
where 
$
\ket{s} = \frac{1}{\sqrt{d}}\sum_i \ket{i}.
$
In particular we have 
\begin{equation}\label{eq:G}
    G_2 = \left(\begin{matrix}
   0 & 1 \\
   1 & 0
\end{matrix}\right),\hspace{2mm}
G_3 = \frac{1}{3}\left(\begin{matrix}
   -1 & 2 & 2 \\
   2 & -1 & 2\\
  2 & 2 & -1
\end{matrix}\right).
\end{equation}
%\fnote{This actually should be defined using the vertices in the neighborhood (not only the degree).}

\begin{definition}\label{def:QW}
	The Grover walk on an undirected graph $G=(V,E)$ is a discrete-time quantum process defined 
    over the vector space
    \[
    \mathtt{Span}\left\{ \ket{u}\ket{v}\mid u\in V,\, v \in \neighbor{u}\right\} %\cong\Comp^{2|E|},
    \]
    by an initial state $\ket{\Phi(0)}$ of the form 
    \[
     \ket{\Phi(0)} = \sum_{u\in V}\sum_{v \in \neighbor{u}} \Phi_{u,v}(0)\ket{u}\ket{v},
     \]
    and an evolution operator $U = SC$, with
    \begin{align*}
    S &= \sum_{u\in V}\sum_{v \in \neighbor{u}} \ket{v,u}\bra{u,v},\\
    C &= \sum_{u\in V} \ket{u}\bra{u}\otimes G_u
    % = \left(\begin{matrix}
    %     G_2 & & & & \\
    %      & G_3 & & & \\
    %      & & \ddots & & \\
    %      & & & G_3 & \\
    %      & & & & G_2 \\
    % \end{matrix}\right)
    ,
    \end{align*}
    where each $G_u$ acts as $G_{\deg(u)}$ on the subspace 
    $\mathtt{Span}\left\{\ket{v}\mid v\in\neighbor{u}\right\}$, i.e., 
    \begin{equation}\label{eq:coin}
    C=\sum_{u\in V}\sum_{v\in \neighbor{u}}\left(\frac{2}{\deg(u)}\sum_{w\in \neighbor{u}}\ket{u,w}\bra{u,v}\right) - \ket{u,v}\bra{u,v}.
    \end{equation}
    % At time $t\ge 1$, the state of the system is
    %  \[
    %  \ket{\Phi(t)} = \sum_{u\in V}\sum_{v \in \neighbor{u}} \Phi_{u,v}(t)\ket{u}\ket{v} = U^{t}\ket{\Phi(0)}.
    %  \]
\end{definition}

Li, Li and Luo \cite{Li+24} have shown that this specific quantum walk traverses the welded trees graph with inverse-polynomial probability in a polynomial number of time steps. Here is the formal statement: 
\begin{proposition}[\cite{Li+24}]\label{lemma:QWcross}        
       Consider a welded trees graph $G$ in $\Gg_n$ and the Grover walk on $G$ with initial state 
    \begin{equation}\label{eq:init}
       \ket{\Phi(0)} = \frac{1}{\sqrt{2}}\sum_{v \in \neighbor{\entr}}\ket{\entr}\ket{v}.
    \end{equation}
    Let
     \begin{equation}\label{eq:t}
     \ket{\Phi(t)} =  U^{t}\ket{\Phi(0)}= \sum_{u\in V}\sum_{v \in \neighbor{u}} \Phi_{u,v}(t)\ket{u}\ket{v}
     \end{equation}
	denote the state of the system at time $t\ge 1$ and define
	$$
	p(t) = \sum_{v \in \neighbor{\exit}}|\braket{\exit, v\mid \Phi(t)}|^2,
	$$
	which represents the probability of measuring the $\exit$ at Step $t$.
    %with $\ket{\psi(0)} = \frac{1}{\sqrt{2}} \sum_{v \in \neighbor{\entr}} \ket{u,v}$.
	There exists $\hat T \in [2n, 3.6 n \log n]$ such that
	$$
	p(\hat T) > \frac{1}{20n}.
	$$
\end{proposition}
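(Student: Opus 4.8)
Following \cite{Li+24} (which mirrors the continuous-time analysis of \cite{Childs+STOC03}), the plan is to use the symmetry of the welded tree to collapse the Grover walk onto a one-dimensional walk on a path with $\Theta(n)$ sites. Partition $V$ into ``columns'' $V_0,\dots,V_{2n+1}$ by distance from $\entr$ inside its tree, extended across the weld, so that $V_0=\{\entr\}$, $V_n$ and $V_{n+1}$ are the two sets of leaves, $V_{2n+1}=\{\exit\}$, and $\abs{V_j}=2^{\min(j,\,2n+1-j)}$. For $0\le j\le 2n$ let $\ket{{\to}_j}$ (resp.\ $\ket{{\leftarrow}_j}$) be the uniform superposition of the arcs $\ket{u}\ket{v}$ with $u\in V_j,\ v\in V_{j+1}$ (resp.\ $u\in V_{j+1},\ v\in V_j$), and put $\mathcal I=\mathtt{Span}\{\ket{{\to}_j},\ket{{\leftarrow}_j}:0\le j\le 2n\}$, of dimension $4n+2$. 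One checks that $\ket{\Phi(0)}=\ket{{\to}_0}\in\mathcal I$, that $S\ket{{\to}_j}=\ket{{\leftarrow}_j}$, and that — because every internal vertex has degree $3$ with its two ``lower'' arcs carrying equal weight inside a common $\ket{{\to}_\bullet}$ or $\ket{{\leftarrow}_\bullet}$ — the coin $C$ splits over $\mathcal I$ as $(+1)$ on $\ket{{\to}_0}$, a $2\times 2$ Grover reflection (diagonal entries $\mp\tfrac13$, off-diagonal entries $\tfrac{2\sqrt2}{3}$) on each pair $\{\ket{{\leftarrow}_{j-1}},\ket{{\to}_j}\}$ for $1\le j\le 2n$, and $(+1)$ on $\ket{{\leftarrow}_{2n}}$. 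Hence $U=SC$ preserves $\mathcal I$, and — reindexing the pair $\{\ket{{\leftarrow}_{j-1}},\ket{{\to}_j}\}$ as ``site $j$'' — the operator $U|_{\mathcal I}$ is exactly a flip-flop coined quantum walk on a path with $2n+2$ sites: reflecting ends, a translation-invariant bulk coin, and a single flip of the coin orientation at the weld (forced by the left--right mirror symmetry of the reduced structure). Finally, since the only arcs leaving $\exit$ are the two comprising $\ket{{\leftarrow}_{2n}}$ and $\ket{\Phi(t)}\in\mathcal I$ has no component on their antisymmetric combination, $p(t)=|\braket{{\leftarrow}_{2n}\mid\Phi(t)}|^2$. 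The claim has become a one-dimensional transport statement: started from the basis state at one end of a path of $2n+2$ sites, the squared amplitude on the state at the other end should exceed $1/(20n)$ at some $\hat T\in[2n,3.6n\log n]$.

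\medskip\noindent\textbf{Step 2 (ballistic transport).} I would diagonalize the translation-invariant bulk of $U|_{\mathcal I}$ by Fourier transform: the bulk coin yields a dispersion relation of the form $\cos\omega(k)=\tfrac{2\sqrt2}{3}\cos k$, so the group velocity $\omega'(k)$ is bounded away from $0$ on a range of momenta and a wavepacket launched at $\entr$ propagates ballistically. It thus needs $\Omega(n)$ steps to reach $\exit$ — the lower cutoff $2n$ also follows from the trivial light-cone bound, since $U$ shifts by at most one site per step — and it spans the whole path after $O(n)$ steps. (Equivalently, via the Szegedy-type spectral correspondence the eigenphases of $U|_{\mathcal I}$ are $\pm\arccos\lambda$, $\lambda$ ranging over the spectrum of the tridiagonal ``reduced adjacency'' matrix on the path — uniform hopping, doubled across the weld — which is precisely the operator driving the continuous-time traversal of \cite{Childs+STOC03}; the discrete-time amplitude at $\exit$ is then transferred from the continuous-time one.) Either way, a constant fraction of the $\ell^2$-mass reaches an $O(\sqrt n)$-neighborhood of the far end at an appropriate time, so the squared amplitude on the single state $\ket{{\leftarrow}_{2n}}$ is $\Omega(1/n)$ at some time.

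\medskip\noindent\textbf{Step 3 (explicit window and constant, and the main obstacle).} To upgrade ``at some time'' to ``$p(\hat T)>1/(20n)$ with $\hat T\in[2n,3.6n\log n]$'', I would bound the time-average of $|\braket{{\leftarrow}_{2n}\mid\Phi(t)}|^2$ over the window $[2n,\,3.6n\log n]$. Expanding $U|_{\mathcal I}$ in an eigenbasis with eigenphases $\theta_\mu$ and overlaps $c_\mu=\braket{\mu\mid{\to}_0}$, $d_\mu=\braket{\mu\mid{\leftarrow}_{2n}}$, the infinite-time average is $\sum_\mu|c_\mu|^2|d_\mu|^2$; the left--right symmetry of the reduced walk ties the two ends together, and with $\Theta(n)$ delocalized relevant eigenvectors each carrying $|c_\mu|^2,|d_\mu|^2=\Theta(1/n)$ this sum is $\Theta(1/n)$. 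The windowed average differs from this by oscillatory cross terms of total size $O\!\left(\sum_{\mu\ne\nu}\frac{|c_\mu c_\nu d_\mu d_\nu|}{W\,|\theta_\mu-\theta_\nu|}\right)$, which falls below the main term once $W=\Theta(n\log n)$, provided the eigenphase gaps of the reduced walk on $\Theta(n)$ sites are not too small. This last quantitative point is exactly where the work lies: one must either carry out a precise stationary-phase analysis of the reduced walk near the edge of its light cone — accounting for the reflections at the two degree-$2$ endpoints and the transmission across the weld — or lower-bound the eigenphase gaps of $U|_{\mathcal I}$ tightly enough that a window of length $3.6n\log n$ suffices, yielding the constant $1/(20n)$. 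This quantitative control of the one-dimensional walk is the technical core handled in \cite{Li+24}.
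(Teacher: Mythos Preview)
The paper does not prove this proposition at all: it is stated with the attribution \cite{Li+24} and used as a black box (the only computation following it is the derivation of the recursion \cref{eq:rec}, which is separate). So there is no ``paper's own proof'' to compare your attempt against.

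Your sketch is a reasonable outline of the argument one expects in \cite{Li+24}: the column-symmetry reduction in Step~1 is correct (your $2\times 2$ coin $\bigl(\begin{smallmatrix}-1/3&2\sqrt2/3\\2\sqrt2/3&1/3\end{smallmatrix}\bigr)$ on $\{\ket{{\leftarrow}_{j-1}},\ket{{\to}_j}\}$ is right for the non-weld interior columns), and Steps~2--3 describe the standard route (dispersion/spectral analysis, then a time-averaging argument controlled by eigenphase gaps). But as you yourself acknowledge in the last sentence, the actual content --- the explicit constants $2n$, $3.6n\log n$, $1/(20n)$ --- is deferred to \cite{Li+24}; what you have written is a plan rather than a proof. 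That is fine for a citation, which is exactly how the present paper treats it.
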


 We analyze in more detail the relation between $\ket{\Phi(t-1)}$ and $\ket{\Phi(t)}$. First, using \cref{eq:coin}, we have
\begin{align*}
 C\ket{\Phi(t-1)}&=
 \sum_{u\in V}\sum_{v\in \neighbor{u}}\Phi_{u,v}(t-1)\left[\left(\frac{2}{\deg(u)}\sum_{w\in \neighbor{u}}\ket{u,w}\right) - \ket{u,v}\right]\\
  &= \sum_{u\in V}\sum_{v\in \neighbor{u}}\sum_{w\in \neighbor{u}}\frac{2}{\deg(u)}\Phi_{u,v}(t-1)\ket{u,w} - \sum_{u\in V}\sum_{v\in \neighbor{u}}\Phi_{u,v}(t-1)\ket{u,v}\\
  &= \sum_{u\in V}\sum_{v\in \neighbor{u}}\sum_{w\in \neighbor{u}}\frac{2}{\deg(u)}\Phi_{u,w}(t-1)\ket{u,v} - \sum_{u\in V}\sum_{v\in \neighbor{u}}\Phi_{u,v}(t-1)\ket{u,v}\\
  &= \sum_{u\in V}\sum_{v\in \neighbor{u}}\left[\frac{2}{\deg(u)}\left(\sum_{w\in \neighbor{u}}\Phi_{u,w}(t-1)\right) - \Phi_{u,v}(t-1)\right]\ket{u,v}.\\
\end{align*}
We thus obtain 
\[
\ket{\Phi(t)}=SC\ket{\Phi(t-1)}=
 \sum_{u\in V}\sum_{v\in \neighbor{u}}\left[\frac{2}{\deg(v)}\left(\sum_{w\in \neighbor{v}}\Phi_{v,w}(t-1)\right) - \Phi_{v,u}(t-1)\right]\ket{u,v},
\]
which implies the following relations between $\Phi_{u,v}(t)$ and $\Phi_{u,v}(t-1)$:
% \begin{align*}
%  \ket{\Phi(t)}&=S\times C\times \ket{\Phi(t)}\\
%  &=S\times \left[\sum_{u\in V}\sum_{v\in \neighbor{u}}\left(\frac{2}{\deg(u)}\sum_{w\in \neighbor{u}}\ket{u,w}\bra{u,v}\right) - \ket{u,v}\bra{u,v}\right] \times 
%  \left[\sum_{u\in V}\sum_{v\in \neighbor{u}}\Phi_{u,v}(t-1)\ket{u,v}\right]\\
%  &= S\times \left[\sum_{u\in V}\sum_{v\in \neighbor{u}}\Phi_{u,v}(t-1)\left[\left(\frac{2}{\deg(u)}\sum_{w\in \neighbor{u}}\ket{u,w}\right) - \ket{u,v}\right]\right]\\
%   &= S\times \left[\sum_{u\in V}\sum_{v\in \neighbor{u}}\sum_{w\in \neighbor{u}}\frac{2}{\deg(u)}\Phi_{u,v}(t-1)\ket{u,w} - \sum_{u\in V}\sum_{v\in \neighbor{u}}\Phi_{u,v}(t-1)\ket{u,v}\right]\\
%   &= S\times \left[\sum_{u\in V}\sum_{v\in \neighbor{u}}\sum_{w\in \neighbor{u}}\frac{2}{\deg(u)}\Phi_{u,w}(t-1)\ket{u,v} - \sum_{u\in V}\sum_{v\in \neighbor{u}}\Phi_{u,v}(t-1)\ket{u,v}\right]\\
%   &= S\times \left[\sum_{u\in V}\sum_{v\in \neighbor{u}}\left[\frac{2}{\deg(u)}\left(\sum_{w\in \neighbor{u}}\Phi_{u,w}(t-1)\right) - \Phi_{u,v}(t-1)\right]\ket{u,v}\right]\\
%   &= \left[\sum_{u\in V}\sum_{v\in \neighbor{u}}\left[\frac{2}{\deg(v)}\left(\sum_{w\in \neighbor{v}}\Phi_{v,w}(t-1)\right) - \Phi_{v,u}(t-1)\right]\ket{u,v}\right].\\
%  \end{align*}
    \begin{equation}\label{eq:rec}
        \forall (u,v) \in E, \quad \Phi_{u,v}(t) = \frac{2}{\deg v}\left(\sum_{w\in \neighbor{v}}\Phi_{v,w}(t-1)\right) - \Phi_{v,u}(t-1).
    \end{equation}

%===================================================================
%===================================================================
\section{Quantum Upper Bound}\label{sec:ub}
%===================================================================
%===================================================================
This section focuses on proving the quantum upper bound stated in Theorem~\ref{main_thm1}: the point-to-point routing problem over $\Gg_n$ has message complexity $\poly(n)$ in the quantum setting. We will actually show more generally how to send \emph{quantum} data from $\entr$ to $\exit$. In this section, we thus replace the $b$-bit classical data $\mess$ by an arbitrary $b$-qubit (unknown) quantum state $\ket{\qmess}$.\footnote{For simplicity we assume that $\ket{\qmess}$ is a pure state but our algorithm also works for mixed states.} Our final algorithm will work  if $\entr$ has a polynomial number of copies of $\ket{\qmess}$ (or is able to create these copies by itself, which is obviously the case if the data is classical). 

\paragraph{Definitions.}

We first need to  define a quantum state $\ket{\perp}$ that is orthogonal to all possible data~$\ket{\qmess}$. For this purpose, we embed $\ket{\qmess}$ into the vector space $\Hh=\Comp^{2^{b+1}}$ as $\ket{\tilde\qmess}=\ket{\qmess}\ket{0}$ and define
\[
\ket{\perp}=\ket{0}^{\otimes b}\ket{1},
\]
which is clearly orthogonal to all possible states $\ket{\tilde \qmess}$. We denote by $\Hh'$ the subspace of $\Hh$ corresponding to all valid data, i.e., 
$\Hh'=\{\ket{\tilde \qmess}\mid \ket{\qmess}\in\Comp^{2^b}\}$. We consider the measurement $\Pi=\{\Pi_0=\ket{\perp}\bra{\perp}, \Pi_1=I-\Pi_0\}$ over $\Hh$.

Our algorithm will use two unitary operators $\tilde G_2$ and $\tilde G_3$ closely associated to the operators $G_2$ and $G_3$ introduced in \cref{eq:G}.
The operator $\tilde G_2$ is defined as follows over two registers $\Rr_1,\Rr_2$ each containing a state in $\Hh$:  
for any state $\psi\in\Hh'$,
\begin{align*}
\tilde G_2\colon \ket{\psi}_{\Rr_1}\ket{\perp}_{\Rr_2}\mapsto \ket{\perp}_{\Rr_1}\ket{\psi}_{\Rr_2},\\
\tilde G_2\colon \ket{\perp}_{\Rr_1}\ket{\psi}_{\Rr_2}\mapsto \ket{\psi}_{\Rr_1}\ket{\perp}_{\Rr_2}.
\end{align*}
The action of $\tilde G_2$ on states that are not of the above form is defined arbitrary. 
Note that $\tilde G_2$ can be implemented straightforwardly by a $\textsf{SWAP}$ operator.
The operator $\tilde G_3$ is defined as follows over three registers $\Rr_1,\Rr_2, \Rr_3$ each containing a state in $\Hh$: 
for any state $\ket{\psi}\in\Hh'$,
\begin{align*}
\tilde G_3\colon \ket{\psi}_{\Rr_1}\ket{\perp}_{\Rr_2}\ket{\perp}_{\Rr_3}&\mapsto 
-\frac{1}{3}\ket{\psi}_{\Rr_1}\ket{\perp}_{\Rr_2}\ket{\perp}_{\Rr_3} + 
\frac{2}{3}\ket{\perp}_{\Rr_1}\ket{\psi}_{\Rr_2}\ket{\perp}_{\Rr_3}+
\frac{2}{3}\ket{\perp}_{\Rr_1}\ket{\perp}_{\Rr_2}\ket{\psi}_{\Rr_3},\\
\tilde G_3\colon \ket{\perp}_{\Rr_1}\ket{\psi}_{\Rr_2}\ket{\perp}_{\Rr_3}&\mapsto 
\phantom{-}\frac{2}{3}\ket{\psi}_{\Rr_1}\ket{\perp}_{\Rr_2}\ket{\perp}_{\Rr_3} - 
\frac{1}{3}\ket{\perp}_{\Rr_1}\ket{\psi}_{\Rr_2}\ket{\perp}_{\Rr_3}+
\frac{2}{3}\ket{\perp}_{\Rr_1}\ket{\perp}_{\Rr_2}\ket{\psi}_{\Rr_3},\\
\tilde G_3\colon \ket{\perp}_{\Rr_1}\ket{\perp}_{\Rr_2}\ket{\psi}_{\Rr_3}&\mapsto 
\phantom{-}\frac{2}{3}\ket{\psi}_{\Rr_1}\ket{\perp}_{\Rr_2}\ket{\perp}_{\Rr_3} + 
\frac{2}{3}\ket{\perp}_{\Rr_1}\ket{\psi}_{\Rr_2}\ket{\perp}_{\Rr_3}-
\frac{1}{3}\ket{\perp}_{\Rr_1}\ket{\perp}_{\Rr_2}\ket{\psi}_{\Rr_3}.
\end{align*}
Again, the action of $\tilde G_3$ on states that are not of the above form is defined arbitrary.

\paragraph{Basic algorithm.}
While the Grover walk allows us to traverse the welded trees graph efficiently (\cref{lemma:QWcross}), it does not follow the distributed setting.
Our goal is to reproduce its dynamic in a quantum distributed setting. 
%The method we use takes inspiration from \cite{roget2024distributed} and works as follow. 
Recall that in the distributed setting, each node $u$ has the registers $\mathsf{D}_u$ and $\mathsf{E}_{u\to v}$, $\mathsf{R}_{u\gets v}$ for all $v\in \neighbor{u}$. All registers $\mathsf{E}_{u\to v}$, $\mathsf{R}_{u\gets v}$ will contain quantum states in $\Hh$.
Communication occurs by swapping registers $\mathsf{E}_{u\to v}$ and $\mathsf{R}_{v\gets u}$ for all $(u,v) \in E$. 
Local operations must be the same for all nodes (except possibly for  $\entr$ and $\exit$). 

Our basic distributed quantum algorithm for traversing the welded trees graph is the algorithm $\Aa(T)$ described below. This algorithm only requires one copy of the quantum data. Initially, all the registers $\mathsf{E}_{u\to v}$, $\mathsf{R}_{u\gets v}$ are initialized to $\ket{\perp}$.

\begin{center}
		\begin{minipage}{16.6 cm} \vspace{2mm}
		\begin{algorithm}[H]
			\SetAlgorithmName{}{}{List of Algorithms}
			\nonl \hspace{-4mm}Algorithm $\Aa(T)$ 
			\tcp*[l]{Distributed implementation of the quantum walk}
            %\nonl \textbf{Input:} $\entr$ receives the message $\mess \in \{0,1\}$.
            \nonl \textbf{Input:} $\entr$ receives one copy of the data $\ket{\varphi} \in \mathbb C^{2^b}$

            %\nonl \hspace{12mm}Each node of $G$ has access to an integer $T$ shared by all nodes.

            %\nonl \textbf{Output:} Each node $u$ of $G$ outputs one classical value $x_u \in \{0,1,\bot\}$.\vspace{2mm}
            \nonl \textbf{Output:} Node $\exit$ outputs either ``success'' or ``failure''\vspace{2mm}

             $\entr$ prepares the state 
            \[
                \frac{1}{\sqrt{2}} \left( \ket{\tilde \qmess}_{\mathsf{R}_{\entr\gets v_1}}\ket{\bot}_{\mathsf{R}_{\entr\gets v_2}} + \ket{\bot}_{\mathsf{R}_{\entr\gets v_1}}\ket{\tilde \qmess}_{\mathsf{R}_{\entr\gets v_2}}\right)
            \]
            on Registers $(\mathsf{R}_{\entr\gets v_1},\mathsf{R}_{\entr\gets v_2})$, where $\{v_1,v_2\}=\neighbor{\entr}$\label{step:init}

		%$\entr$ applies $\textsf{SWAP}$ on $(\mathsf{E}_{\entr\to v_1},\mathsf{R}_{v_1\gets \entr})$ and on $(\mathsf{E}_{\entr\to v_2},\mathsf{R}_{v_2\gets \entr})$

            \For{$t$ {\bf from }$1$ \KwTo $T$\label{step:forbegin}}
					{

                    Each node $u\in V$ applies the operator $\tilde G_{\deg(u)}$ on the set of registers $(\mathsf{R}_{u\gets v})_{v\in\neighbor{u}}$\label{step:diff}

                    Each node $u\in V$ applies, for each $v \in \neighbor{u}$, the operator  $\textsf{SWAP}$ on  $(\mathsf{E}_{u\to v},\mathsf{R}_{u\gets v})$\label{step:swap}

                    For each $(u,v)\in E$, $\textsf{SWAP}$ is applied on $(\mathsf{E}_{u\to v},\mathsf{R}_{v\gets u})$\hspace{4mm}\tcp*[l]{communication}\label{step:communication}

					}

            %Node $u\in V$ applies measurement $\Pi$ on Register $\mathsf{R}_{u\gets v}$, for each $v \in \neighbor{u}$

            $\exit$ applies the measurement $\Pi$ on each of $\mathsf{R}_{\exit \gets v}$ for $v\in \mathcal{N(\exit)}$ \label{step:measure}

            \If{\rm one of the outcomes is 1}
				{{\bf target returns} ``success'' \hspace{0mm}\tcp*[l]{the register with outcome 1 contains $\ket{\varphi}$}}
			\lElse
				{{\bf target returns} ``failure''}
			\end{algorithm}
		\end{minipage}
	\end{center}
	
We now analyze the success probability and the message complexity of this algorithm.

\begin{lemma}\label{lemma:distrib_cross}
    The message complexity of $\Aa(T)$ is $O(b\cdot T)$ for a message of length $b$ qubits.
	Furthermore, there exists a value $\hat T\in [2n, 3.6 n \log n]$ such that, after running $\Aa(\hat T)$,
	%$$
	%\mathbb P(x_{\exit}=\mess) > \frac{1}{20n}.
	%$$
    \[
    \mathbb P(\exit \text{ outputs ``success''} ) > \frac{1}{20n}.
    \]
\end{lemma}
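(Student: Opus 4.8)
The plan is to establish the two assertions separately, both resting on one claim. For $u\in V$ and $v\in\neighbor{u}$, write $\ket{\mathcal{R}_{u\gets v}}$ for the basis configuration of the whole network in which register $\mathsf{R}_{u\gets v}$ holds $\ket{\tilde\qmess}$ and every other register holds $\ket{\perp}$; this is the distributed encoding of the Grover-walk basis state $\ket{u}\ket{v}$. The key claim, proved by induction on $t$, is that after $t$ iterations of the loop the configuration of $\Aa(T)$ is exactly $\sum_{u\in V}\sum_{v\in\neighbor{u}}\Phi_{u,v}(t)\ket{\mathcal{R}_{u\gets v}}$, where $\Phi_{u,v}(t)$ are the amplitudes of the Grover walk with the initial state of \cref{eq:init}.

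For the message complexity I would first record the structural invariant that at every moment of the execution the configuration is a superposition of basis configurations each having exactly one register over the whole network whose content differs from $\ket{\perp}$ (that content always lying in $\Hh'$). It holds after the state-preparation step, and each loop operation preserves it: the two $\textsf{SWAP}$ steps only relocate a register's content, and $\tilde G_{\deg(u)}$ applied to $(\mathsf{R}_{u\gets v})_{v\in\neighbor{u}}$ (with $\deg(u)\in\{2,3\}$) takes a configuration whose unique non-$\ket{\perp}$ register among those is one of $u$'s reception registers to a superposition of configurations each again with a unique such register, and fixes the all-$\ket{\perp}$ state of those registers (the only other case that arises; here we use the natural extension of the partial definition of $\tilde G_d$ fixing $\ket{\perp}^{\otimes\deg(u)}$, which is legitimate as $\ket{\perp}^{\otimes\deg(u)}$ is orthogonal to the span of the ``one-hot-with-data'' states). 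Consequently, just before any communication step at most one emission register is non-$\ket{\perp}$, and it holds a state of $\Hh=\Comp^{2^{b+1}}$, i.e.\ $b+1$ qubits; so each round contributes $O(b)$ to the quantum message complexity and the total is $O(b\cdot T)$.

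Next I would carry out the induction. The base case $t=0$ is precisely the state prepared in the first step, since $\Phi_{\entr,v_1}(0)=\Phi_{\entr,v_2}(0)=1/\sqrt{2}$ and all other amplitudes vanish. For the inductive step, note that for $d\in\{2,3\}$ the operator $\tilde G_d$ acts on the ``one-hot-with-data'' states exactly as $G_d$ acts on $\mathtt{Span}\{\ket{v}\mid v\in\neighbor{u}\}$ (immediate from the explicit definitions of $\tilde G_2,\tilde G_3$ and \cref{eq:G}); hence the diffusion step implements the coin operator $C$ in this encoding, sending $\sum_{u,v}\Phi_{u,v}(t-1)\ket{\mathcal{R}_{u\gets v}}$ to $\sum_{u,v}\big[\tfrac{2}{\deg u}\sum_{w\in\neighbor{u}}\Phi_{u,w}(t-1)-\Phi_{u,v}(t-1)\big]\ket{\mathcal{R}_{u\gets v}}$, which matches the expansion of $C\ket{\Phi(t-1)}$ derived just before \cref{eq:rec}. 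Then the $\textsf{SWAP}$ on $(\mathsf{E}_{u\to v},\mathsf{R}_{u\gets v})$ followed by the communication $\textsf{SWAP}$ on $(\mathsf{E}_{u\to v},\mathsf{R}_{v\gets u})$ moves the data from $\mathsf{R}_{u\gets v}$ to $\mathsf{R}_{v\gets u}$, implementing the swap operator $S$; combining the two and invoking \cref{eq:rec} yields $\sum_{u,v}\Phi_{u,v}(t)\ket{\mathcal{R}_{u\gets v}}$, closing the induction. To finish, \cref{lemma:QWcross} supplies $\hat T\in[2n,3.6n\log n]$ with $\sum_{v\in\neighbor{\exit}}|\Phi_{\exit,v}(\hat T)|^2>\tfrac{1}{20n}$; after $\Aa(\hat T)$ the configuration is $\sum_{u,v}\Phi_{u,v}(\hat T)\ket{\mathcal{R}_{u\gets v}}$, in which every register $\mathsf{R}_{\exit\gets v}$ equals $\ket{\perp}$ except in the terms with $u=\exit$, where $\mathsf{R}_{\exit\gets v}$ holds $\ket{\tilde\qmess}\in\Hh'$, which lies in the range of $\Pi_1=I-\ket{\perp}\bra{\perp}$; so the measurement by $\exit$ yields outcome $1$ on some register — and $\exit$ returns ``success'' — with probability exactly $\sum_{v\in\neighbor{\exit}}|\Phi_{\exit,v}(\hat T)|^2>\tfrac{1}{20n}$.

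The hard part will be the inductive step: one must trace the single data item through the three loop operations in the correct order while keeping the bookkeeping straight (reception versus emission registers, and $u\gets v$ versus $v\gets u$) in order to verify that the encoding $\ket{u}\ket{v}\mapsto\ket{\mathcal{R}_{u\gets v}}$ intertwines $C$ with the diffusion step and $S$ with the two $\textsf{SWAP}$ steps. The structural invariant, the handling of the all-$\ket{\perp}$ configuration, and the final measurement analysis are then routine.
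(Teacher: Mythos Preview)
Your proposal is correct and follows essentially the same approach as the paper: the same encoding $\ket{u}\ket{v}\mapsto\ket{\mathcal{R}_{u\gets v}}$, the same induction showing the loop implements $U=SC$ in this encoding via \cref{eq:rec}, and the same appeal to \cref{lemma:QWcross} for the success probability and to the one-hot structure for the message bound. Your explicit handling of the all-$\ket{\perp}$ case for $\tilde G_d$ (needed since the diffusion is applied at every node, not just the one holding the data) is a detail the paper leaves implicit, but otherwise the arguments coincide.
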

\begin{proof}
%	We provide a sketch of the proof. Full details are available in \cite{roget2024distributed}. 	
	Let us denote by $\mathcal E$ the quantum register corresponding to the set of all the emission registers $\mathsf{E}_{u\to v}$, for all $(u,v)\in E$, and by $\mathcal R$ the quantum register corresponding to the set of all reception registers $\mathsf{R}_{u\gets v}$, for all $(u,v)\in E$.
    %and by $\mathcal D$ set of data registers $\left(\mathsf{D}_{u}\right)_{(u)\in V}$.
    For each $(u,v)\in E$,
    we define the quantum states 
    \begin{align*}
    \ket{\mathcal{E}_{u\to v}}_{\mathcal{E}} =
    \bigotimes_{u'\in V}\bigotimes_{v' \in \neighbor{u'}} 
    \ket{\psi_{u,v,u',v'}}_{\mathsf{E}_{u'\to v'}} \\
    \ket{\mathcal{R}_{u\gets v}}_{\mathcal{R}} =
    \bigotimes_{u'\in V}\bigotimes_{v' \in \neighbor{u'}} \ket{\psi_{u,v,u',v'}}_{\mathsf{R}_{u'\gets v'}} 
    \end{align*}
    where
    \[
    \ket{\psi_{u,v,u',v'}}=
    \begin{cases}
        \ket{\tilde \qmess}&\textrm{ if }(u,v)=(u',v'),\\
        \ket{\perp}&\textrm{ otherwise }.
    \end{cases}
    \]

    At the end of Step \ref{step:init}, the state of the whole system is 
    \[
	    \ket{\Psi(0)} =\left(\frac{1}{\sqrt 2}\sum_{v \in \neighbor{\entr}} \ket{\mathcal{R}_{\entr\gets v}}_{\mathcal R}\right) \otimes \ket{\bot}^{\otimes 2|E|}_{\mathcal E}.
	\]
    (Hereafter we omit all the registers $\mathsf{D}_u$ since the algorithm will not act on them.)
    Note that this state is analog to the state of \cref{eq:init}, with $\ket{\entr}\ket{v}$ replaced by $\ket{\mathcal{R}_{\entr\gets v}}$.
    
	For any $t\ge 1$, consider the state  of the network after $t$ iterations of the loop of Steps \ref{step:forbegin}--\ref{step:communication}
    . We will show that this state is
	\begin{equation}\label{eq:state}
	\ket{\Psi(t)} =\left(\sum_{u\in V}\sum_{v \in \neighbor{u}} \Phi_{u,v}(t)\ket{\mathcal{R}_{u\gets v}}_{\mathcal R}\right) \otimes \ket{\bot}^{\otimes 2|E|}_{\mathcal E}
    \end{equation}
    at the end of the loop (i.e., after Step \ref{step:communication}), where $\Phi_{u,v}(t)$ are the amplitudes appearing in \cref{eq:t}. In other words, $\ket{\Psi(t)}$ is analog to the state $\ket{\Phi(t)}$ of \cref{eq:t}, with $\ket{u}\ket{v}$ replaced by $\ket{\mathcal{R}_{u\gets v}}$.

    Assume that the state of the system when entering the loop for some $t\ge 1$ is $\ket{\Psi(t-1)}$. 
    After Step \ref{step:diff}, 
    the states becomes 
    \[
    \ket{\Psi'(t-1)} =\left(\sum_{u\in V}\sum_{v \in \neighbor{u}} \Phi_{u,v}(t-1)(\tilde G_{\deg(u)} \otimes I)\ket{\mathcal{R}_{u\gets v}}_{\mathcal R}\right) \otimes \ket{\bot}^{\otimes |E|}_{\mathcal E},
    \]
    where $\tilde G_{\deg(u)} \otimes I$ means $\tilde G_{\deg(u)}$ applied 
    to the set of registers $(\mathsf{R}_{u\gets v})_{v\in\neighbor{u}}$. 
    Since 
    \[
        \tilde G_{\deg(u)} \otimes I =
        \sum_{u\in V}\sum_{v\in \neighbor{u}}\left(\frac{2}{\deg(u)}\sum_{w\in \neighbor{u}}\ket{\mathcal{R}_{u\gets w}}_{\mathcal R}\bra{\mathcal{R}_{u\gets v}}_{\mathcal R}\right) - \ket{\mathcal{R}_{u\gets v}}_{\mathcal R}\bra{\mathcal{R}_{u\gets v}}_{\mathcal R},
    \]
    we obtain
    \begin{align*}
        \ket{\Psi'(t-1)} 
        &= \left(\sum_{u\in V}\sum_{v \in \neighbor{u}} \Phi_{u,v}(t-1)\left[\frac{2}{\deg(u)}\left(\sum_{w\in \neighbor{u}}\ket{\mathcal{R}_{u\gets w}}_{\mathcal R}\right) - \ket{\mathcal{R}_{u\gets v}}_{\mathcal R}\right]\right) \otimes \ket{\bot}^{\otimes |E|}_{\mathcal E}\\
        &\hspace{-14mm}= \left(\sum_{u\in V}\sum_{v \in \neighbor{u}}\sum_{w\in \neighbor{u}}\frac{2}{\deg(u)}\Phi_{u,v}(t-1)\ket{\mathcal{R}_{u\gets w}}_{\mathcal R} - \sum_{u\in V}\sum_{v \in \neighbor{u}}\Phi_{u,v}(t-1)\ket{\mathcal{R}_{u\gets v}}_{\mathcal R}\right) \otimes \ket{\bot}^{\otimes |E|}_{\mathcal E}\\
        &\hspace{-14mm}= \left(\sum_{u\in V}\sum_{v \in \neighbor{u}}\sum_{w\in \neighbor{u}}\frac{2}{\deg(u)}\Phi_{u,w}(t-1)\ket{\mathcal{R}_{u\gets v}}_{\mathcal R} - \sum_{u\in V}\sum_{v \in \neighbor{u}}\Phi_{u,v}(t-1)\ket{\mathcal{R}_{u\gets v}}_{\mathcal R}\right) \otimes \ket{\bot}^{\otimes |E|}_{\mathcal E}\\
        &\hspace{-14mm}= \left(\sum_{u\in V}\sum_{v \in \neighbor{u}} \left[\frac{2}{\deg(u)}\left(\sum_{w\in \neighbor{u}}\Phi_{u,w}(t-1)\right) - \Phi_{u,v}(t-1)\right]\ket{\mathcal{R}_{u\gets v}}_{\mathcal R}\right) \otimes \ket{\bot}^{\otimes |E|}_{\mathcal E}\\
        &\hspace{-14mm}= \left(\sum_{u\in V}\sum_{v \in \neighbor{u}} \Phi_{v,u}(t)\ket{\mathcal{R}_{u\gets v}}_{\mathcal R}\right) \otimes \ket{\bot}^{\otimes |E|}_{\mathcal E},
    \end{align*}
    where \cref{eq:rec} is used to derive the last equality.
    % \[
    % \left[\frac{2}{\deg(u)}\left(\sum_{w\in \neighbor{u}}\Phi_{u,w}(t-1)\right) - \Phi_{u,v}(t-1)\right] = \Phi_{v,u}(t).
    % \]
    % Thus, the state after Step \ref{step:diff} can be written as 
    % \[
    % \ket{\Psi'(t-1)} = \left(\sum_{u\in V}\sum_{v \in \neighbor{u}} \Phi_{v,u}(t)\ket{\mathcal{R}_{u\gets v}}\right) \otimes \ket{\bot}^{\otimes |E|}_{\mathcal E}.
    % \]
     After Step \ref{step:swap}, the state becomes 
	\[
	\ket{\Psi''(t-1)} =\left(\sum_{u\in V}\sum_{v \in \neighbor{u}} \Phi_{v,u}(t)\ket{\mathcal{E}_{u\to v}}_{\mathcal E}\right) \otimes \ket{\bot}^{\otimes |E|}_{\mathcal R}.
	\]
    After Step \ref{step:communication}, the states becomes
    \begin{align*}
	\ket{\Psi'''(t-1)} &=\left(\sum_{u\in V}\sum_{v \in \neighbor{u}} \Phi_{v,u}(t)\ket{\mathcal{R}_{v\gets u}}_{\mathcal R}\right) \otimes \ket{\bot}^{\otimes |E|}_{\mathcal E}\\
    &=\left(\sum_{u\in V}\sum_{v \in \neighbor{u}} \Phi_{u,v}(t)\ket{\mathcal{R}_{u\gets v}}_{\mathcal R}\right) \otimes \ket{\bot}^{\otimes |E|}_{\mathcal E}\\
    &=\ket{\Psi(t)}.
    \end{align*}
 %    After Step \ref{step:swap}, 
 %    the states becomes 
 %    \begin{align*}
	% \ket{\Psi(t+1)} &=\left(\sum_{u\in V}\sum_{v \in \neighbor{u}} \Psi_{u,v}(t)(\tilde G_{\deg(u)} \otimes I)\ket{\mathcal{E}_{v\to u}}\right) \otimes \ket{\bot}^{\otimes |E|}_{\mathcal E}\\
 %    &=\left(\sum_{u\in V}\sum_{v \in \neighbor{u}} \Phi_{u,v}(t+1)\ket{\mathcal{E}_{u\to v}}\right) \otimes \ket{\bot}^{\otimes |E|}_{\mathcal E}.
 %    \end{align*}
	% Finally, after Step \ref{step:diff}, the state becomes
	% \begin{align*}
	% \ket{\Psi(t +1)} &=\left(\sum_{u\in V}\sum_{v \in \neighbor{u}} \psi_{u,v}(t)(\tilde G_{\deg(u)} \otimes I)\ket{\mathcal{E}_{v\to u}}\right) \otimes \ket{\bot}^{\otimes |E|}_{\mathcal E}\\
    % &=\left(\sum_{u\in V}\sum_{v \in \neighbor{u}} \Psi_{u,v}(t+1) \ket{\mathcal{E}_{v\to u}}\right) \otimes \ket{\bot}^{\otimes |E|}_{\mathcal E}.
    % \end{align*}
    %We note that the evolution operator between $\left(\Psi_{u,v}(t)\right)_{u,v}$ and $\left(\Psi_{u,v}(t+1)\right)_{u,v}$ is precisely the evolution operator of the Grover walk $U = SC$. %Furthermore, the initial state writes
	%which implies that $\left(\phi_{u,v}(0)\right)_{u,v}$ is identical to the initial state of the Grover walk. 
    \paragraph{Message complexity.} The state just before communication happens at Step \ref{step:communication} of the loop is
    \[
        \left(\sum_{u\in V}\sum_{v \in \neighbor{u}} \Phi_{v,u}(t)\ket{\mathcal{E}_{u\to v}}_{\mathcal E}\right) \otimes \ket{\bot}^{\otimes |E|}_{\mathcal R}.
    \]
    This is a superposition of configurations each having only one register $\mathsf{E}_{u\to v}$ for some $(u,v)$ that is not equal to $\ket{\bot}$. Thus, the message complexity for one round is $b+1$ qubits. The message complexity for $T$ rounds is $(b+1)\cdot T$ qubits.
    \paragraph{Probability of success.}
    At Step \ref{step:measure}, the state is 
    \[
        \left(\sum_{u\in V}\sum_{v \in \neighbor{u}} \Phi_{u,v}(T)\ket{\mathcal{R}_{u\gets v}}_{\mathcal R}\right) \otimes \ket{\bot}^{\otimes |E|}_{\mathcal E}.
    \]
    We can use  \cref{lemma:QWcross} to analyze the success probability. From \cref{lemma:QWcross}, there exists $\hat T\in [2n, 3.6 n \log n]$ such that after $\hat T$ steps,  $$\sum_{v\in \neighbor{\exit}}|\Phi_{\exit, v}(\hat T)|^2 \geq \frac{1}{20 n}.$$
	The probability that $\exit$ outputs ``success'' is thus at least $\frac{1}{20n}$. On success, the register with outcome 1 contains $\ket{\tilde\varphi}$, and thus $\ket{\varphi}$.
	% \begin{align*}
	% 	\mathbb{P}(x_{\exit} = \mess) &= \mathbb{P}\left(\bigvee_{v \in \neighbor{\exit}} x_{\exit\to v} = \mess\right)\\
	% 	&= \sum_{v \in \neighbor{\exit}} \mathbb{P}\left( x_{\exit\to v} = \mess\right) & \blacktriangleright\text{Disjoint events}\\
	% 	&= \sum_{v\in \neighbor{\exit}}|\Phi_{\exit, v}(\hat T)|^2 \\
	% 	&\geq \frac{1}{20 n},
	% \end{align*}
    % as claimed.
\end{proof}

While $\Aa(T)$ allows to send a message from $\entr$ to $\exit$ with polynomial message complexity, two issues remain to be solved. First, the probability of success is only inverse-polynomial. Second, \cref{lemma:distrib_cross} only guarantees the existence of $ \hat T\in [2n, 3.6 n \log n]$ that achieves inverse-polynomial success probability but does not shows how to find this value. These issues can all be solved by calling $\Aa$ a polynomial number of times. This is implemented in the following algorithm, called $\Bb$. Since each call to $\Aa$ requires a new copy of $\ket{\qmess}$, $\Bb$ requires a polynomial number of copies of $\ket{\qmess}$. Of course, if the data is classical then only one copy is needed since classical information can be duplicated.

\begin{center}
		\begin{minipage}{16.4 cm} \vspace{2mm}
		\begin{algorithm}[H]
			\SetAlgorithmName{}{}{List of Algorithms}
			\nonl \hspace{-4mm}Algorithm $\Bb(n,\varepsilon)$ 
			\tcp*[l]{Quantum algorithm for point-to-point routing in $\Gg_n$}
            %\nonl \textbf{Input:} $\entr$ receives the message $\mess \in \{0,1\}$.
            \nonl \textbf{Input:} $\entr$ receives several copies of the quantum data $\ket{\varphi} \in \mathbb C^{2^b}$ 
            
            \nonl\hspace{10mm}\tcp*[l]{only one copy is needed if the data is classical}\vspace{2mm}

            \For{$T$ {\bf from }$2n$ \KwTo $\lceil 3.6 n \log n \rceil$}
					{
                        \For{$k$ {\bf from }$1$ \KwTo $\lceil 20 n \log (1/\varepsilon) \rceil$}
                        {
                            Run $\Aa(T)$ on one new copy of the input $\ket{\varphi}$

                            \lIf{\rm ``success''}
				                {$\exit$ stops}
					    }
                    }
		\end{algorithm}
		\end{minipage}
	\end{center}

    When $T=\hat T$, the innermost loop of  $\Bb(n,\varepsilon)$ increases the success probability of Algorithm $\Aa(\hat T)$ to a constant by calling it $O(n\log 1/\varepsilon)$ times. The outermost loop tries all the possible values for $T$ from $2n$ to $3.6n\log n$. Note that $\Bb$ requires for all nodes to know the width $n$. However, it is possible to free ourselves from this constraint by using an additional loop trying every possible values for~$n$. This process keeps the complexity polynomial in $n$ but still requires a bound over $n$ to detect failures. 

    We now formally derive the success probability and the message complexity.
\begin{proposition}\label{prop:up}
	Algorithm $\Bb(n,\varepsilon)$ solves the point-to-point routing problem with probability at least $1-\varepsilon$ and message complexity $O(b\cdot\poly(n)\cdot\log(1/\varepsilon))$. 
\end{proposition}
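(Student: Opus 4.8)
The plan is to analyze Algorithm $\Bb(n,\varepsilon)$ by combining the guarantees of \cref{lemma:distrib_cross} with a standard probability-amplification argument, and then bounding the total message complexity by summing over all loop iterations. First I would recall from \cref{lemma:distrib_cross} that there exists a ``good'' value $\hat T\in[2n,3.6n\log n]$ such that a single run of $\Aa(\hat T)$ makes $\exit$ output ``success'' with probability at least $\frac{1}{20n}$; crucially, the outermost loop of $\Bb$ tries exactly the integer values $T=2n,2n+1,\ldots,\lceil 3.6 n\log n\rceil$, so it necessarily reaches $T=\lceil\hat T\rceil$ (and one should note $\hat T$ can be taken integral, or the analysis can be done for the integer in $[2n,3.6n\log n]$ that \cref{lemma:QWcross} produces). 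When $T=\hat T$, the innermost loop runs $\Aa(\hat T)$ on fresh copies of $\ket{\varphi}$ a total of $K=\lceil 20n\log(1/\varepsilon)\rceil$ times, each independently succeeding with probability $\ge\frac{1}{20n}$; hence the probability that \emph{every} one of these $K$ runs fails is at most $\left(1-\frac{1}{20n}\right)^{K}\le e^{-K/(20n)}\le e^{-\log(1/\varepsilon)}=\varepsilon$. Since $\exit$ stops as soon as any run reports ``success'', and since a run never reports ``success'' spuriously (on success the measured register contains exactly $\ket{\tilde\varphi}$, hence $\ket{\varphi}$, by the last line of the proof of \cref{lemma:distrib_cross}), the overall success probability of $\Bb(n,\varepsilon)$ is at least $1-\varepsilon$.

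For the message complexity, I would observe that each call $\Aa(T)$ has message complexity $O(b\cdot T)=O(b\cdot n\log n)$ by \cref{lemma:distrib_cross}, since $T\le\lceil 3.6n\log n\rceil$. The number of calls is at most (number of values of $T$) $\times$ (inner loop length) $= O(n\log n)\cdot O(n\log(1/\varepsilon)) = O(n^2\log n\log(1/\varepsilon))$, so in the worst case (when $\exit$ never stops early) the total message complexity is $O(b\cdot n\log n)\cdot O(n^2\log n\log(1/\varepsilon)) = O(b\cdot\poly(n)\cdot\log(1/\varepsilon))$. Early stopping only decreases this, so the bound holds unconditionally.

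There is one subtlety worth addressing carefully rather than glossing over: what happens when $\exit$ decides to stop the whole algorithm. Recall from \cref{sub:dc} that the algorithm terminates when one node (here $\exit$) decides to stop, so once $\exit$ sees a ``success'' outcome it halts the protocol; this is consistent with our accounting since we upper-bounded the message complexity by the full run with no early termination. A second point is that the nodes other than $\exit$ proceed obliviously through the nested loops: each non-$\exit$ node simply re-runs the local operations of $\Aa(T)$ for the scheduled values of $(T,k)$, which requires only that every node knows $n$ (as noted after the description of $\Bb$), and this is permitted — indeed \cref{main_thm2} is stated to hold even when nodes know $|V|$.

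The step I expect to require the most care is not the amplification itself but verifying that the successive calls to $\Aa(T)$ compose correctly in the distributed quantum model: each invocation must start from the all-$\ket{\perp}$ initialization of the emission and reception registers with a \emph{fresh} copy of $\ket{\varphi}$ loaded at $\entr$, and on a failed run the residual quantum state across the network must be discarded (or reset to $\ket{\perp}$) so that it does not pollute the next run. Since \cref{lemma:distrib_cross} already establishes that the end-of-round configuration of $\Aa(T)$ is the clean state of \cref{eq:state}, after the final measurement $\Pi$ at $\exit$ the unmeasured branch is supported on $\ket{\mathcal{R}_{u\gets v}}$ states with $u\neq\exit$, all of which $\exit$ can locally reset; the other nodes likewise reset their registers to $\ket{\perp}$, which is a valid (non-unitary but physically realizable) local operation and costs no communication. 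With this bookkeeping in place, the independence of the $K$ trials at $T=\hat T$ is genuine and the union-bound / product-bound calculation above goes through, completing the proof.
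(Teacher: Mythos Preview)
Your proof is correct and follows essentially the same approach as the paper: the amplification bound $(1-\tfrac{1}{20n})^{\lceil 20n\log(1/\varepsilon)\rceil}\le\varepsilon$ and the message-complexity count via (number of calls)$\times O(b\cdot T)$ are exactly what the paper does. Your additional paragraphs on early stopping and register reset go beyond what the paper's proof addresses; they are reasonable sanity checks, though note that your phrasing ``all of which $\exit$ can locally reset'' is off---$\exit$ cannot touch registers $\mathsf{R}_{u\gets v}$ for $u\neq\exit$, but as you immediately add, each node resets its \emph{own} registers, which is what makes the runs independent.
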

\begin{proof}
	Let us assume that $\Bb(n,\varepsilon)$ is running on an instance $G$ of the welded trees $\Gg_n$ of width $n$. We note that $\Bb(n,\varepsilon)$ stops if and only if the message has reached $\exit$. Let us count all the calls to $\Aa$ :
	$$
	\#\texttt{Calls}(n) =  
    \sum_{T=2n}^{\lceil 3.6 n \log n \rceil}\sum_{k=1}^{\lceil 20 n \log (1/\varepsilon) \rceil} 1 =
	O\left(n^2\log n \log (1/\varepsilon)\right).
	$$
	This amounts to a polynomial number of calls to $\Aa(T)$, which has message complexity $O(b\cdot T)$ from Lemma \ref{lemma:distrib_cross}. Since $T$ is bounded by $\lceil 3.6 n \log n \rceil$, the final message complexity is 
    \[
    O\left(b\cdot n^3(\log n)^2 \log (1/\varepsilon)\right) = O(b\cdot\poly( n)\log(1/\varepsilon)).
    \]
	We bound the success probability as follows:
	\begin{align*}
		\mathbb{P}\left[\text{$\Bb(n,\varepsilon)$ succeeds}\right] &= \mathbb{P}\left[\text{$\Bb(n,\varepsilon)$ stops when $T=\hat T$}\right] \\
		&\qquad + \mathbb{P}\left[\text{$\Bb(n,\varepsilon)$ stops when $T\neq \hat T$}\right]\\
		&\geq \mathbb{P}\left[\text{$\Bb(n,\varepsilon)$ stops when $T=\hat T$}\right]\\
		&=1 - \mathbb{P}\left[\text{$\Bb(n,\varepsilon)$ does not stop when $T= \hat T$}\right]\\
		&=1 - \prod_{k=1}^{\lceil 20 n \log (1/\varepsilon) \rceil}\mathbb{P}\left[\text{$\Aa(\hat T)$ fails}\right]\\
		& \geq 1 - \prod_{k=1}^{\lceil 20 n \log (1/\varepsilon) \rceil}\left(1-\frac{1}{20 n}\right)\\
		&= 1- \left(1- \frac{1}{20n}\right)^{\lceil 20 n \log (1/\varepsilon) \rceil}\\
		&\ge 1- \varepsilon,
	\end{align*}
    as claimed.
\end{proof}

\cref{main_thm1} immediately follows from \cref{prop:up} by taking $\varepsilon=1/\exp(n)$.

%===================================================================
%===================================================================
\section{Classical Lower Bound}
%===================================================================
%===================================================================

This section is dedicated to the proof of Theorem~\ref{main_thm2}: the point-to-point routing problem over $\mathcal{G}_n$ has message complexity $2^{\Omega(n)}$, even if the nodes share randomness. In this section, $G=(V,E)$ denotes a graph in $\mathcal{G}_n$. For concreteness, we use the specific set of identifiers $\mathsf{ID} = [2^{n+2}-2]$ (note that a lower bound for this case immediately gives a lower bound for the general case where the identifiers can be arbitrary). 
Function $\id\colon V\rightarrow \mathsf{ID}$ matches each node to its identifier. Similarly, function $\rho\colon V\times[3]\rightarrow V$ matches each node and each of its port numbers taken in $[3]$ to the neighbor it communicates with. Let $\rho(\entr,3)=\rho(\exit,3)=\perp$.

The success probability and message complexity of an algorithm are defined for the graph $G$ in $\Gg_n$, the identifiers and the port numbers being chosen to be the worst possible ones for the algorithm to be run. However throughout the section, we assume that the latter are all drawn uniformly at random instead of being chosen as the worst case scenario. Since in this case the success probability is the mean success probability over all choices for $G$, $\id$ and $\rho$, this is necessarily an upper bound on the success probability in the worst case scenario.

Let us now introduce the following centralized game essentially equivalent to the point-to-point routing problem.

\begin{game}\label{game:simul}
    The player knows $\id(\entr)$, $\id(\exit)$ and has access to an oracle $O\colon\mathsf{ID}\times[3]\rightarrow\mathsf{ID}\times[3]$ such that $O(\id(v),p)$ returns $(\id(u),q)$ where $u=\rho(v,p)$ and $v=\rho(u,q)$.
    
    The  player must output a set of identifiers that corresponds to a path from $\entr$ to $\exit$.
\end{game}

The sources of randomness in this game, and the ones to come, are the graph $G$, the functions $\id$ and $\rho$, as well as the random coins used in the player's strategy. For any algorithm $\mathcal{A}$ for a game $X$, we define
\begin{equation*}
    \mathbb{P}_X(\mathcal{A}) = \mathbb{P}_{\substack{G,\id, \rho,\\\texttt{coins}}}\big[ \mathcal{A} \text{ wins Game X}  \text{ on } G \text{ with identifiers and port numbers given by } \id \text{ and }\rho\big].
\end{equation*}

Game 1 can simulate the behavior of any shared-randomness distributed algorithm in the network, and is thus easier than our original problem:
\begin{proposition}\label{prop:red1}
    If there exists an algorithm $\mathcal{A}$ that solves the point-to-point routing problem over $\Gg_n$ with probability at least $\alpha$ by sending $t$ messages, then there exists an algorithm $\mathcal{A}_1$ for Game~\ref{game:simul} using $t$ queries to the oracle such that
    \begin{equation*}
        \mathbb{P}_1(\mathcal{A}_1) \geq \alpha-\frac{1}{2^b}.
    \end{equation*}
    This lower bound holds even when algorithm $\mathcal{A}$ uses shared randomness. 
\end{proposition}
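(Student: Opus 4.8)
The plan is to simulate the distributed algorithm $\mathcal{A}$ inside Game~\ref{game:simul}, using one oracle query each time $\mathcal{A}$ sends a (nonempty) message along an edge, and to show that whenever $\mathcal{A}$ manages to deliver the data $\mess$ to $\exit$, the centralized player has enough information to reconstruct a path from $\entr$ to $\exit$. First I would observe that the player already holds every identifier in $\mathsf{ID} = [2^{n+2}-2]$ (since this set is fixed), so it can internally instantiate a copy of $\mathcal{A}$ at every node: it maintains the local state (local data, emission and reception registers) of every node, drawing the same shared randomness that $\mathcal{A}$ uses and supplying $\entr$ with some guessed input — crucially the player does \emph{not} know $\mess$, so it just feeds $\entr$ a uniformly random $b$-bit string $\hat\mess$. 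The only piece of global structure the player lacks is $\rho$, i.e.\ which port of which node connects to which node; this is exactly what the oracle $O$ reveals.

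The simulation proceeds round by round. In each round the player performs every node's local computation, which prepares messages $m_{u\to p}$ on each port $p$. For each port $p$ of each node $u$ on which $m_{u\to p}\neq\perp$, the player issues the query $O(\id(u),p)$, obtaining $(\id(w),q)$ with $w=\rho(u,p)$; it then writes $m_{u\to p}$ into the reception register $\mathsf{R}_{w\gets q}$ of node $w$, exactly mimicking the swap of Registers $\mathsf{E}_{u\to p}$ and $\mathsf{R}_{w\gets q}$. Thus the number of oracle queries equals the number of nonempty messages sent by $\mathcal{A}$, which is at most $t$. Because the player also records, each time it learns an edge $\{u,w\}$, the pair $(\id(u),\id(w))$, it incrementally builds the subgraph of all edges traversed by a message. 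When the simulated $\mathcal{A}$ terminates (the simulated $\exit$ decides to stop), the player examines this traversed subgraph: I claim that if the real $\mathcal{A}$ would have succeeded in delivering the data to $\exit$, then in this traversed subgraph there is a connected path of edges from $\entr$ to $\exit$. Indeed, any classical distributed algorithm that causes $\exit$ to hold a copy of $\mess$ must have propagated $\mess$ along some sequence of messages forming a walk from $\entr$ to $\exit$ (the information content at $\exit$ depends on $\mess$ only through messages it ultimately received, tracing back through messages those senders received, and so on, back to the unique node — $\entr$ — whose input is $\mess$). Extracting a simple path from such a walk is immediate, so the player outputs it and wins Game~\ref{game:simul}.

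The main obstacle — and the source of the $-1/2^b$ loss — is that the player does not know $\mess$, so it cannot literally run the ``correct'' execution of $\mathcal{A}$; it runs the execution on a random guess $\hat\mess$. The key point to make rigorous is that the graph $G$, the functions $\id,\rho$, and the algorithm's internal coins are independent of $\mess$, so the execution on input $\hat\mess$ has exactly the same distribution of behavior as the execution on the true input when $\hat\mess=\mess$; and conditioned on $\hat\mess=\mess$ (probability $2^{-b}$, or more precisely at least $2^{-b}$ if the true input is itself drawn adversarially — here we may take the true $\mess$ to be whatever worst case it is and $\hat\mess$ uniform, so $\mathbb{P}[\hat\mess=\mess]\ge 2^{-b}$), the simulation reproduces a successful run and the player wins. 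Hence $\mathbb{P}_1(\mathcal{A}_1)\ge \mathbb{P}[\text{success}]-\mathbb{P}[\hat\mess\neq\mess]\cdot 1$ is the wrong bookkeeping; the correct statement is that $\mathcal{A}_1$ wins whenever the simulated run (on $\hat\mess$) is one in which the data reaches $\exit$, which by the independence above happens with probability at least $\alpha$ over the \emph{simulated} randomness regardless of $\hat\mess$ — but the player only gets a \emph{certified} path when it can trust the simulation corresponds to the real world, which it always does for $G,\id,\rho$; the subtlety is purely that delivering ``$\hat\mess$'' rather than ``$\mess$'' is still delivering \emph{a} message along a path, so the traversed subgraph still contains an $\entr$–$\exit$ path whenever $\mathcal{A}$ on input $\hat\mess$ succeeds. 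Therefore $\mathbb{P}_1(\mathcal{A}_1)\ge \alpha$ outright; the $-1/2^b$ slack is kept only to absorb the event that $\mathcal{A}$ is permitted to "succeed" by guessing $\mess$ without any path (a success mode allowed by the weak $\ge q$ correctness notion), which happens with probability at most $2^{-b}$ and contributes no usable path. I would therefore structure the final inequality around subtracting off exactly this guessing probability, yielding $\mathbb{P}_1(\mathcal{A}_1)\ge\alpha-2^{-b}$.
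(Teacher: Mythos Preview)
Your approach is essentially the paper's: simulate $\mathcal{A}$ centrally, spending one oracle query per nonempty message, and then read off an $\entr$--$\exit$ path from the set of traversed edges. You are actually more explicit than the paper about one point the paper glosses over, namely that the simulated $\entr$ must be fed \emph{some} input $\hat\mess$; handling this with a uniformly random $\hat\mess$ is the right move.

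That said, your final paragraph is muddled and contains a wrong intermediate claim. The sentence ``Therefore $\mathbb{P}_1(\mathcal{A}_1)\ge \alpha$ outright'' is not correct: success of $\mathcal{A}$ means only that $\exit$ outputs the correct string, not that the message graph contains an $\entr$--$\exit$ path, so you cannot get $\alpha$ without the $2^{-b}$ correction. The clean bookkeeping you are circling around is simply this: take $\hat\mess$ uniform; since $\mathcal{A}$ has success probability $\ge\alpha$ on \emph{every} input, it succeeds on input $\hat\mess$ with probability $\ge\alpha$; in any successful run, either the traversed edges connect $\entr$ to $\exit$ (and the player wins), or they do not, in which case $\exit$'s output is a function independent of $\hat\mess$ and hence matches the uniform $\hat\mess$ with probability at most $2^{-b}$. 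Subtracting gives $\mathbb{P}_1(\mathcal{A}_1)\ge\alpha-2^{-b}$. Drop the detour through ``conditioned on $\hat\mess=\mess$'' entirely; it plays no role.
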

\begin{proof}
    Let $\mathcal{A}$ be such an algorithm for the point-to-point routing problem. We denote by $R$ its round complexity and for all $j\in [R]$, $M_j=\{(\id(v),p)\,|\,v\in V, p\in[3],m(j,\id(v),p)\neq\perp\}$ where $m(j,\id(v),p)$ is the message sent by node $v$ through its port number $p$ at round $j$, it is equal to $\perp$ if $v$ does not send a message through its port number $p$ at round $j$.

    First notice that the centralized player of Game~\ref{game:simul} can simulate the private coin of every distributed node, as well as their shared coin. As initially there is no node that has knowledge unavailable to the centralized player, the latter can first arbitrarily order $(\id(v),p)\in M_1$ and then compute $m(1,\id(v),p)$ for each of them. Then, $\mathcal{A}_1$ simulates the first round of message passing: for every $(\id(v),p)\in M_1$, $\mathcal{A}_1$ calls $O(\id(v),p)=(\id(u),q)$ and can suppose that node $u$ now has knowledge of $m(1,\id(v),p)$, as well as the fact that $\rho(u,q)=v$. After having done this for every $(\id(v),p)\in M_1$, $\mathcal{A}_1$ can compute $M_2$ and pursue accordingly to simulate $\mathcal{A}$ until the end. $\mathcal{A}_1$ has made exactly one call to $O$ for every $(\id(v),p)$ in $\bigcup_{1\leq j\leq R}M_j$, i.e., the query complexity of $\mathcal{A}_1$ is equal to $t$.

    Finally, two options follow the fact that $\mathcal{A}$ solves the point-to-point routing problem. The first option is that a node that received no information from $\entr$ randomly guessed the message~$\mess$ to route and (if it is not $\exit$ itself) transmitted it to $\exit$. This has probability at most $1/2^b$ of happening.
    The more sensible option is that $\mess$ has been forwarded from $\entr$ to $\exit$ during the run of $\mathcal{A}$. This means that the queries of the player in Game~\ref{game:simul} include a path between them, allowing the player to win the game.
\end{proof}

Let us now turn Game~\ref{game:simul} into an easier game.

\begin{game}\label{game:rdm_explo}
    As in Game~\ref{game:simul}, the player knows $\id(\entr)$, $\id(\exit)$ and has access to the oracle $O$. Let $t \geq 0$ be a parameter. Define $\Roots = \{ \id(\entr),\id(\exit) \}$, and for each $\id^r \in \Roots$, let $I(\id^r) = \{ \id^r \}$. For $i=1,\dots,t$, the player must choose $\id_i$ and $p_i$ according to the following instructions.\begin{itemize}
        \item[1a.] Either choose to draw $\id^r$ uniformly at random in $\mathsf{ID}\smallsetminus(\bigcup_{\id\in \Roots} I(\id))$, add $\id^r$ to $\Roots$ and set $I(\id^r)=\{\id^r\}$. Let $\id_i=\id^r$.
        \item[1b.] Or choose $\id^r\in \Roots$ and $\id_i\in I(\id^r)$.
        \item[2.] Let $P_i$ be the set of port numbers $p\in [3]$ such that\begin{itemize}
            \item $p\neq 3$ if $\id_i=\id(\entr)$ or $\id_i=\id(\exit)$,
            \item there is no $j<i$ such that $\id_j=\id_i$ and $p_j=p$,
            \item there is no $j<i$ such that $O(\id_j,p_j)=(\id_i,p)$.
        \end{itemize} 
        \item[3.] Draw $p_i$ uniformly at random in $P_i$ and add the identifier returned by $O(\id_i,p_i)$ to $I(\id^r)$.
    \end{itemize}
    The player wins if at least one of the following conditions holds:
    \begin{enumerate}
        \item there exists distinct $\id^r,\id^{r'}\in \Roots$ such that $I(\id^r)\cap I(\id^{r'})\neq \varnothing$, or,
        \item there exist $\id^r\in \Roots$ and $\id_i\in I(\id^r)$ such that $O(\id_i,p_i)$ returns $\id^r$ or the same identifier as $O(\id_j,p_j)$ for some $\id_j\in I(\id^r)\smallsetminus\{\id_i\}$.
    \end{enumerate}
\end{game}

Game~\ref{game:rdm_explo} explicitly simulates the random explorations that a distributed algorithm of message complexity $t$ performs. Every identifier in $\Roots$ corresponds to a node that independently decides to start exploring its surroundings by communicating with one of its neighbors. At the beginning, only $\entr$ and $\exit$ can hold this role. But at every step $i\in[t]$, any other node might independently decide to do the same. The player chooses Step 1a to simulate this. The randomness of drawing an identifier to add to $\Roots$ accounts for the fact that a node that has not yet received nor sent any messages has no information on which node of the graph it actually is when starting to explore. When the player rather chooses Step 1b, it signifies that the node of identifier $\id_i$, which was previously contacted during the exploration started by the node of identifier $\id_r$, extends the latter by sending a message through an unknown port.  The Step 2 is merely formalizing that the player is choosing an edge that has never been explored in one direction or an other which will be a dubious choice. It is in Step 3 that the message passing from the node of identifier $\id_i$ through its unknown port $p_i$ is simulated. The discovered node is then added to the set $I(\id^r)$ of nodes discovered during the exploration initiated by the one of identifier $\id^r$.

The first winning condition of the game represents two explorations crossing paths, while the second one represents an exploration looping back on itself. When one of these conditions is met and messages have been sent according to the simulation in the distributed algorithm, it can no longer be supposed that nodes \textit{obliviously} explore the network in the latter. Therefore, the instructions of Game~\ref{game:rdm_explo}, which account for this supposed obliviousness by randomizing choices of identifiers and port numbers, no longer accurately simulate the distributed algorithm. This is why we stop the simulation there and claim that the game is won. Nonetheless we will show that even those winning conditions are very unlikely to happen for any algorithm, through the final game, Game~\ref{game:embed}, essentially equivalent to Game~\ref{game:rdm_explo}.

For the moment, let us show that Game 2 is easier than Game 1.

\begin{proposition}\label{prop:red2}
    For any algorithm $\mathcal{A}_1$ for Game 1 that makes $t_1$ queries to the oracle, there exists an algorithm $\mathcal{A}_2$ for Game 2 with parameter $t_2\leq t_1$ such that
    \begin{equation*}
        \mathbb{P}_2(\mathcal{A}_2)\geq \mathbb{P}_1(\mathcal{A}_1).
    \end{equation*}
\end{proposition}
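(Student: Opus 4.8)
The plan is to build $\mathcal{A}_2$ by running $\mathcal{A}_1$ inside Game~\ref{game:rdm_explo}, turning each oracle query of $\mathcal{A}_1$ into exactly one iteration of the game. First I would put $\mathcal{A}_1$ into a normal form, arguing that this changes neither $\mathbb{P}_1(\mathcal{A}_1)$ nor the number of queries (except possibly to lower it). (i) $\mathcal{A}_1$ never makes a \emph{useless} query: it never asks $O(\id,3)$ for $\id\in\{\id(\entr),\id(\exit)\}$ (this returns $\perp$), never re-asks a pair $(\id,p)$, and never asks $(\id_i,p)$ when an earlier query gave $O(\id_j,p_j)=(\id_i,p)$; in all these cases the answer is forced by the current view, so $\mathcal{A}_1$ can supply it itself. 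The ports $\mathcal{A}_1$ is then still allowed to ask about a discovered node $\id_i$ form precisely the set $P_i$ of Game~\ref{game:rdm_explo}. (ii) Whenever $\mathcal{A}_1$ queries a port of a not-yet-discovered node, it chooses that identifier uniformly at random among all undiscovered identifiers, and it chooses the port uniformly at random in $P_i$. This is without loss because $\id$ and $\rho$ are uniformly random: conditioned on $\mathcal{A}_1$'s view, the restriction of $\id^{-1}$ to the undiscovered identifiers is a uniformly random bijection onto the undiscovered vertices, and the unrevealed ports of each vertex carry a uniformly random assignment to its unrevealed neighbors, so replacing a fixed admissible choice by a uniform one leaves the distribution of answers unchanged --- the identifier and port labels being mere internal names.

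Given $\mathcal{A}_1$ in this normal form, I would define $\mathcal{A}_2$ as follows. It passes $\id(\entr),\id(\exit)$ to $\mathcal{A}_1$ (identical in both games) and maintains the sets $I(\cdot)$ of Game~\ref{game:rdm_explo}. On $\mathcal{A}_1$'s $i$-th query $O(\id,p)$: if $\id$ is already discovered, say $\id\in I(\id^r)$, then $\mathcal{A}_2$ uses Step~1b with $\id_i=\id$ and this $\id^r$; if $\id$ is not yet discovered, then $\mathcal{A}_2$ uses Step~1a, lets $\id^r$ be the identifier drawn in Step~1a, and informs $\mathcal{A}_1$ that its query was actually on $\id^r$. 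In either case the game then (Step~3) draws $p_i$ uniformly in $P_i$ and returns $O(\id_i,p_i)$, which $\mathcal{A}_2$ forwards to $\mathcal{A}_1$. By normalization~(ii) the identifier drawn in Step~1a has exactly the distribution of $\mathcal{A}_1$'s intended undiscovered-node choice and $p_i$ has exactly the distribution of $\mathcal{A}_1$'s intended port choice; hence, on the shared randomness $(G,\id,\rho)$, the sequence of queries and answers that $\mathcal{A}_1$ sees inside $\mathcal{A}_2$ is distributed exactly as in a genuine run of Game~\ref{game:simul}, and in particular the \emph{revealed subgraph} --- the subgraph of $G$ formed by the edges the queries have disclosed --- is the same random object in both runs. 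Since each query of $\mathcal{A}_1$ triggers one iteration of Game~\ref{game:rdm_explo}, we get $t_2\le t_1$, with strict inequality if useless queries were removed.

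Finally I would match the two winning events. Game~\ref{game:rdm_explo} is won precisely when the revealed subgraph fails to be a disjoint union of trees, one rooted at each element of $\Roots$: its first winning condition is exactly ``two of these trees share a vertex'', while its second winning condition fires exactly when a query closes a cycle inside one tree --- every vertex of $I(\id^r)$ is either $\id^r$ or the result of an earlier query extending that exploration, so a repeated vertex is always detected. Conversely, if the revealed subgraph contains a path from $\entr$ to $\exit$ then, $\entr$ and $\exit$ being distinct roots, it cannot be such a disjoint forest, so Game~\ref{game:rdm_explo} is won. Winning Game~\ref{game:simul} means that $\mathcal{A}_1$'s queries reveal a path from $\entr$ to $\exit$ (this is the certificate already used in the proof of Proposition~\ref{prop:red1}; outputting a path that uses an edge the player never queried is correct only with probability $2^{-\Omega(n)}$, which is harmless for the final bound). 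As the revealed subgraphs of the two runs coincide, $\mathbb{P}_2(\mathcal{A}_2)\ge\mathbb{P}[\,\mathcal{A}_1\text{ reveals a path from }\entr\text{ to }\exit\,]=\mathbb{P}_1(\mathcal{A}_1)$, which is the claim.

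I expect the crux to be normalization~(ii) together with this identification of winning events: one must argue carefully that adaptively choosing which identifier to probe next gives $\mathcal{A}_1$ no advantage over the blind draws in Step~1a of Game~\ref{game:rdm_explo}, i.e., that an undiscovered identifier is a pure ``red herring'', and one must pin down the operative meaning of winning Game~\ref{game:simul} so that the inequality is preserved exactly. The remaining bookkeeping --- keeping $\mathcal{A}_1$'s simulated view consistent with the identifiers the game actually produces --- is routine once that symmetry is in hand.
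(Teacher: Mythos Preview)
Your approach matches the paper's: strip the three types of useless queries, invoke the uniform randomness of $\id$ and $\rho$ to argue that a fixed admissible choice of identifier or port is distributionally equivalent to a uniform draw, and translate each remaining query into one iteration of Game~\ref{game:rdm_explo} via Step~1a or~1b. One small gap to patch: your normalization~(ii) randomizes the port only for queries on \emph{undiscovered} nodes, but Step~3 of Game~\ref{game:rdm_explo} draws $p_i$ uniformly in $P_i$ on \emph{every} iteration, so you need the same port-symmetry argument for already-discovered nodes as well (the paper handles this in one line); apart from that, your treatment --- in particular the forest-structure analysis of the winning conditions and the explicit caveat about unqueried edges --- is at least as careful as the paper's.
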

% \begin{lemma}
    %     Let $G$ be a graph.
%     If there exists an algorithm $\mathcal{A}$ that solves Game 1 with at most $t$ queries to the oracle for any~$G$, then there exists an algorithm $\mathcal{A}'$ for Game 2 with parameter $t$ such that
%     \begin{equation}
%         \mathbb{P}_2^{G,\ids}(\mathcal{A}')\geq \mathbb{P}_1^G(\mathcal{A})
%     \end{equation}
%     holds for any $G$ and any $\ids$ \fnote{j'ai rajouté "for any $G$ and any $\ids$"}.
% \end{lemma}
\begin{proof}
     Let us show that the lemma holds for $\mathcal{A}_2=\mathcal{A}_1$, as any query of $\mathcal{A}_1$ can be translated into Game~\ref{game:rdm_explo}. Denote $(O(\id_i,p_i))_{i\in[t_1]}$ the sequence of queries performed by Algorithm~$\mathcal{A}_1$. First, we ignore all queries of the following form:
     \begin{itemize}
        \item $O(\id(\entr),3))$ or $O(\id(\exit),3))$,
        \item $O(\id_i,p_i)$ if there is $j<i$ such that $\id_i=\id_j$ and $p_i=p_j$,
        \item $O(\id_i,p_i)$ if there is $j<i$ such that $O(\id_j,p_j)=(\id_i,p_i)$.
    \end{itemize}Indeed, in these queries the player already knows that no new information can be gained and thus removing them from the strategy of $\mathcal{A}_1$ does not affect its outcome. Denote $(O(\id_i,p_i))_{i\in[t_2]}$ the resulting sequence of queries. Let $i\in [t_2]$. Suppose that $(O(\id_j,p_j))_{1\leq j<i}$ also corresponds to the sequence of queries performed by $\mathcal{A}_2$ in Game~\ref{game:rdm_explo} so far. Suppose furthermore that for every $1\leq j <i$, $\id_j$ and the identifier returned by $O(\id_j,p_j)$ are both in $I(\id^r)$ for some $\id^r\in \Roots$.
    \begin{itemize}
        \item If $\id_i=\id(\entr)$ in $\mathcal{A}_1$, let the player of Game~\ref{game:rdm_explo} apply Step 1b, choose $\id^r=\id(\entr)\in \Roots$ and $\id_i=\id(\entr)\in I(\id(\entr))$. The player does the same respectively if $\id_i=\id(\exit)$. In this case, $\id_i$ in $\mathcal{A}_2$ does match $\id_i$ in $\mathcal{A}_1$.
        
        \item If there is no $j<i$ such that $\id_i=\id_j$ or $\id_i$ was returned by $O(\id_j,p_j)$ and if $\id_i$ is neither $\id(\entr)$ nor $\id(\exit)$, then $\id_i$ was in $\mathsf{ID}\smallsetminus(\bigcup_{\id\in \Roots} I(\id))$, the set of identifiers that were not encountered yet. Observe that, as the identifiers are assigned randomly to nodes, the strategy of $\mathcal{A}_1$ in this case is equivalent to choosing uniformly at random. This choice of $\id_i$ is then an application of Step 1a in Game~\ref{game:rdm_explo}.
        
        \item Otherwise, let $\id^r=\id_i$ and $i_r=i$. As long as there is $j<i_r$ such that $\id^r=\id'$ with $\id'=\id_j$ or $\id'$ was returned by $O(\id_j,p_j)$, set $\id^r=\id'$ and $i_r=j$. Notice that the resulting $\id^r$ falls into one of the two previous cases for $i=i_r$, meaning that $\id^r\in \Roots$. In the end this query was the result of applying Step 1b in Game~\ref{game:rdm_explo} and choosing $\id^r$ and $\id_i$ as defined here. By construction, one can verify that every intermediate $\id'$ was indeed added to $I(\id^r)$ in Step 3 of the corresponding turn $j\in [t]$.
    \end{itemize}
    Finally, it needs to be shown that the ports numbers $p_i$ for $i\in[t]$ were chosen exactly as stated in Steps 2 and 3 of Game~\ref{game:rdm_explo}. Notice that, as the strategy of $\mathcal{A}_1$ has been stripped of any useless queries, $p_i\in P_i$ for every $i\in[t]$, where $P_i$ is the set defined in Step 2 of Game~\ref{game:rdm_explo}. Similarly to the identifiers, as the port numbers are assigned randomly in $\mathcal{A}_1$, $p_i$ is chosen accordingly to Step 3 of Game~\ref{game:rdm_explo}.
    
    It has been shown that the queries of Game~\ref{game:simul} actually follow the steps of Game~\ref{game:rdm_explo}. Let us conclude by observing that the first winning condition of the latter is a straightforward requirement of linking $\entr$ and $\exit$ through a path in the former, from which follows the statement of the lemma. 
\end{proof}

We now define the random embedding of $r$-rooted trees. This will allow us to define Game~\ref{game:embed}, \textit{almost} equivalent to Game~\ref{game:rdm_explo}, but more intuitive and easier to manipulate. We start by defining $r$-rooted binary trees.

\begin{definition}
Let $r\ge 1$ be an integer. An $r$-rooted binary tree is a rooted tree such that the root has at most $r$ children and every other node has at most $2$ children. 
\end{definition}

Let $T$ be an $r$-rooted tree. We assume that the vertices of~$T$ are labeled as $\{0,\dots,t\}$ in such a way that $0$ is the label of the root, and if $i$ is the parent of $j$, then $i<j$. For a vertex $v\in G$ such that $r\le \deg_G(v)$, an embedding of $T$ in $G$ rooted in $v$ is a function $\pi\colon\{0,\dots,t\}\to V$ satisfying the following two conditions:
\begin{enumerate}
    \item $\pi(0) = v$,
    \item for $u,v$ neighbors in $T$, $\pi(u)$ and $\pi(v)$ are neighbors in $G$.
\end{enumerate}
We say that $\pi$ is improper if there exist two distinct $u,v\in T$ such that $\pi(u)=\pi(v)$.

\begin{definition}[random embedding]
The output $\pi\colon\{0,\dots,t\}\to V$ of the following process is called a random embedding of~$T$ in $G$ rooted in $v$:
\begin{enumerate}
    %\item We first enumerate the vertices of $T$: $\{0,\dots,t\}$ such that if $i$ is the parent of $j$ then $i<j$.
    \item Set $\pi(0) = v$.
    \item 
    %Let $C(0)$ be the set of the children of $0$ in $T$. 
    %Choose $\pi_{\lvert C(0)}$ uniformly at random among the injections from $C(0)$ to $N_G(v)$.
    For each child $k\in [t]$ of \,$0$ in $T$, choose $\pi(k)$ uniformly at randomly from $\mathcal{N}_G(v)$ without replacement.
    \item For $i = 1$ to $t$, if $i$ is not a leaf of $T$ and if $\pi(i)$ is not $\entr$ nor $\exit$\footnote{If $\pi(i)=\entr$ or $\exit$, the rest of the embedding is not defined. This is not a problem in our case as a condition for winning Game~\ref{game:embed} is already met.}:
        %\fnote{It is probably easier to understand if we distinguish two cases: $\deg_T(i)=1$ and $\deg_t(i)=2$?}
        \begin{enumerate}
            \item Let $j$ be the parent of $i$ in $T$;
            \item For each child $k\in[t]$ of $i$ in $T$, choose $\pi(k)$ uniformly at random from $\mathcal{N}_G(\pi(i))\setminus \{\pi(j)\}$ without replacement.
            %\item Let $C(i)$ be the set of children in $T$ of $i$;
            %\item $\pi_{\lvert C(i)}$ is chosen uniformly at random among the injection from $C(i)$ to $N_G(\pi(i))\setminus \{\pi(j)\}$
        \end{enumerate}
\end{enumerate}
\end{definition}

We are now ready to define a new game. 
%restate Game~\ref{game:rdm_explo} more straightforwardly.

\begin{game}\label{game:embed}
    Let $t\geq 0$, the player chooses $t+2$ trees $T_0, T_1,\ldots, T_{t+1}$ such that:
    \begin{enumerate}
        \item each tree has $t$ edges;
        \item $T_0$ and $T_1$ are 2-rooted binary trees;
        \item $T_2,\ldots, T_{t+1}$ are 3-rooted binary trees.
    \end{enumerate}
Set $v_0 = \entr$ and $v_1 = \exit$. Then for $i \in\{2,\ldots, t+1\}$, draw $v_i$ uniformly at random over $V\setminus \big(\{\entr,\exit\}\big)$ with replacement. For $i\in\{0,\dots,t+1\}$, let $\pi_i$ be a random embedding of $T_i$ in $G$ rooted in $v_i$.
The player wins if and only if there exist distinct $i,j \in\{0,\ldots, t+1\}$ such that $\pi_i(T_i)\cap\pi_j(T_j) \neq \varnothing$ or if there exists $i\in\{0,\ldots, t+1\}$ such that $\pi_i(T_i)$ is improper.
% We set $v_0 = ENTRANCE$ and $v_1 = EXIT$. Then for $2\leq i \leq t+1$, each $v_i$ is drawn uniformly at random over $V(G)$. Now for $i\leq t+2$, we denote by $\pi_i$ the random embedding obtained by the above procedure applied on $T_i$ and $v_i$.
% The player wins if and only if there exist $i,j \leq t+2$ such that $\pi_i(T_i)\cap\pi_j(T_j) \neq \varnothing$ or if there exists $i\leq t+2$ such that $\pi_i(T_i)$ contains a cycle \fnote{Is $\pi_i(T_i)$ a graph or a set? Does it correspond to the concept of improper embedding in \cite{Childs+STOC03}?}.
\end{game}
We first show that Game~\ref{game:embed} is easier than Game~\ref{game:rdm_explo}.

\begin{proposition}\label{prop:red3}
    Let Game~\ref{game:rdm_explo} and Game~\ref{game:embed} be parametrized by the same $t\geq 0$. Then, for any algorithm $\mathcal{A}_2$ for Game~\ref{game:rdm_explo}, there exists an algorithm $\mathcal{A}_3$ for Game~\ref{game:embed} such that
    \begin{equation*}
        \mathbb{P}_3(\mathcal{A}_3)\geq\mathbb{P}_2(\mathcal{A}_2)\,.
    \end{equation*}
\end{proposition}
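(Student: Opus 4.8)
The plan is to show that a run of any $\mathcal{A}_2$ in Game~\ref{game:rdm_explo} is, up to the irrelevant identifier labels, nothing more than a family of \emph{independent random embeddings of a fixed forest of trees}, and that Game~\ref{game:embed} merely asks the player to commit in advance to a padded copy of that forest. First I would \emph{de-adaptify $\mathcal{A}_2$}: without loss of generality the choices $\mathcal{A}_2$ makes at Steps~1a--1b of Game~\ref{game:rdm_explo} (whether to spawn a new root, and otherwise which $\id^r\in\Roots$ and which already-discovered node to extend) do not depend on the oracle's answers. Indeed, as long as no winning condition has fired, every call to $O$ has returned a \emph{fresh} identifier distinct from $\id(\entr)$ and $\id(\exit)$ (otherwise winning condition~1 would already hold), so, $\id$ being a uniformly random bijection, the conditional law of the still-unrevealed part of $(G,\id,\rho)$ is invariant under relabelling the seen identifiers and the history is uninformative; fixing the player's coins to their best value then reduces $\mathcal{A}_2$ to a non-adaptive \emph{plan}: a sequence of $t$ instructions that, on the optimistic assumption that every call returns a brand new vertex, builds a fixed forest $\mathcal{F}$ consisting of one tree rooted at $\id(\entr)$, one rooted at $\id(\exit)$, and at most $t$ further trees (one per Step-1a instruction), with at most $t$ edges in total. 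Since $\entr$ and $\exit$ are never extended through port~$3$ and every non-root node is reached through exactly one port, the first two trees are $2$-rooted and the others $3$-rooted binary trees.

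Next, conditioning on $G$, I would verify that running the plan over uniform $\id,\rho$ reveals precisely a family of \emph{independent} random embeddings of the trees of $\mathcal{F}$ — the $\entr$- and $\exit$-trees rooted at $\entr$ and $\exit$, and each Step-1a tree rooted at an independent vertex drawn uniformly \emph{without replacement} among the not-yet-discovered vertices. Extending a node with current image $w$ through a port chosen uniformly among those of $w$ that were neither used before nor lead back to the node from which $w$ was reached returns a uniform element of $\mathcal{N}_G(w)$ minus the parent's image minus the images of $w$'s previously created children, which is exactly the rule defining the random embedding; this uses that the restriction of $\rho$ to the ports of $w$ is, conditionally on everything revealed so far, a uniform bijection onto $\mathcal{N}_G(w)$, that graphs in $\Gg_n$ have no multiple edges, and that before a win each vertex was discovered exactly once. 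Moreover, although the plan may interleave the extension of several trees and nodes, each node's children-assignment depends only on the images of that node and of its parent, so the joint law of the revealed subgraph is independent of the interleaving order and factorizes over the trees; a collision between two trees or inside a tree — the only event that could couple the explorations — is exactly what makes $\mathcal{A}_2$ win. Hence $\mathbb{P}_2(\mathcal{A}_2)$ equals the probability that in such a family of independent random embeddings of $\mathcal{F}$ two images meet or one image is improper.

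Finally I would let $\mathcal{A}_3$ take $T_0$ to be the $\entr$-tree of $\mathcal{F}$, with node labels arranged so that it forms a rooted subtree on the smallest labels, padded (say by attaching a path to one of its leaves) to a $2$-rooted binary tree with exactly $t$ edges; define $T_1$ similarly from the $\exit$-tree; take $T_2,\dots$ to be the Step-1a trees of $\mathcal{F}$ padded to $3$-rooted binary trees with $t$ edges; and fill any remaining slots (if $\mathcal{F}$ has fewer than $t+2$ trees) with arbitrary $3$-rooted binary trees with $t$ edges. The restriction of a random embedding of a padded $T_i$ to its core nodes has the same law as a random embedding of the corresponding tree of $\mathcal{F}$, because padding nodes never constrain the images of core nodes. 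Now couple Game~\ref{game:embed} with the run of the plan by using the same $G$ and driving the with-replacement draws $v_2,\dots,v_{t+1}$ from the plan's without-replacement fresh draws, declaring $\mathcal{A}_3$ a winner whenever some $v_i$ would land on $\entr$, $\exit$, or an already-used vertex (in which case two of the sets $\pi_j(T_j)$ indeed meet); off that event the core parts of the $\pi_i$ reproduce the exploration exactly, so whenever $\mathcal{A}_2$ wins the core embeddings already witness an intersection or an improper image. Thus $\{\mathcal{A}_2\text{ wins}\}\subseteq\{\mathcal{A}_3\text{ wins}\}$ under the coupling, which yields $\mathbb{P}_3(\mathcal{A}_3)\geq\mathbb{P}_2(\mathcal{A}_2)$.

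I expect the main obstacle to be the identification in the middle step: rigorously showing that the adaptive, interleaved exploration of several trees has the same distribution as \emph{independent} random embeddings. This is the delicate ``bringing back parallelization'' point, and it requires carefully tracking the without-replacement constraints, establishing conditional independence of the per-node children-assignments given the current partial embedding, and disposing of the boundary cases (an embedding reaching $\entr$ or $\exit$, a query closing a cycle) by observing that they coincide with winning events on both sides.
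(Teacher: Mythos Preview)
Your proposal is correct and follows essentially the same route as the paper: exploit the uniform randomness of $\id$ and $\rho$ to argue that the oracle's answers (fresh identifiers, as long as no winning condition has fired) carry no information, thereby reducing $\mathcal{A}_2$ to a fixed forest of exploration trees; pad the trees and fill the missing slots to meet Game~\ref{game:embed}'s format; identify the port-number draw of Step~3 in Game~\ref{game:rdm_explo} with the child-assignment rule of the random embedding; and observe that the replacement-versus-no-replacement discrepancy for the roots and the extra padding edges can only raise $\mathcal{A}_3$'s winning probability. Your treatment of the independence of the interleaved explorations and of the coupling between the two root-drawing schemes is more explicit than the paper's (which handles both points rather tersely), but the underlying argument is the same.
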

\begin{proof}
    Once $\mathcal{A}_2$ has been run, define $G_r=(V_r,E_r)$ for $\id^r\in \Roots$, where $V_r$ is the set of vertices whose identifiers are in $I(\id^r)$, and $E_r$ is the set of edges incident to some node of identifier $\id_i$ through its port number $p_i$ such that $\id_i\in I(\id^r)$. Notice that as long as Game~\ref{game:rdm_explo} has not been won, the graphs $G_r$ are all trees of at most $t$ edges and are all node-disjoint one from another. Now define $\mathcal{A}_3$ the algorithm that constructs every tree $G_r$ in the same way as $\mathcal{A}_2$, assuming the game is not won until the end of the $t$-th iteration and without querying the different $O(\id_i,p_i)$. The only obstacle to doing so is that the choices of $\mathcal{A}_2$ might be dependent on the identifiers and port numbers that the queries return. However, as the latter are attributed randomly, $\mathcal{A}_3$ can instead use values it itself drew randomly in the correct supports of the functions $\id$ and $\rho$. 
    %The only twist is that for $\mathcal{A}_3$ it is done with replacement. However in case a collision happens, $\mathcal{A_3}$ wins Game 3.
    Finally, let $\mathcal{A}_3$ use the trees computed by $\mathcal{A}_2$ as its strategy for Game~\ref{game:embed}. It designates the nodes of identifier $\id^r \in \Roots$ as the roots of the trees ($\entr$ being the root of $T_0$ and $\exit$ that of $T_1$) and re-names the nodes as required by the embedding function $\pi$. For any tree of less than $t$ edges, $\mathcal{A}_3$ adds the missing edges arbitrarily. $\Roots$ being composed of $\id(\entr)$, $\id(\exit)$ and identifiers drawn uniformly at random without replacement at Step 1a for some turn $i\in [t]$ in Game~\ref{game:rdm_explo}, the trees computed by $\mathcal{A}_3$ respect the requirements of Game~\ref{game:embed}. We show that the claim of the lemma holds.
    
    By how $\mathcal{A}_3$ was just defined and by the definition of Game~\ref{game:embed}, as long as neither Game~\ref{game:rdm_explo} nor Game~\ref{game:embed} is won, for any $i\in [t]$, there is a one-to-one correspondence between any $\id_i$ considered in $\mathcal{A}_2$ and a node $\eta_i$ in some $\Pi_j$ resulting from $\mathcal{A}_3$ being run. Indeed, $\id_i$ corresponds to the node of same identifier in some $G_r$, which is renamed in some tree $T_j$ by $\mathcal{A}_3$ and then embedded into $\Pi_j$. 

    We now show that winning Game~\ref{game:embed} with algorithm $\mathcal{A}_3$ is at least as likely as winning Game~\ref{game:rdm_explo} with algorithm $\mathcal{A}_2$. Notice that for any query $O(\id_i,p_i)$ called from Steps 2 and 3 of Game~\ref{game:rdm_explo}, the latter are equivalent to adding a child to $\eta_i$ and embedding it through $\pi$. Indeed, the first restriction of Step~2 infers that if $T_2,\dots,T_{t+1}$ are 3-rooted, $T_0$ and $T_1$ are 2-rooted. The second restriction infers that two different children of the same node $\eta_i$ in any tree cannot be embedded to the same node in $V$. Finally, the third restriction infers that a child of $\eta_i$ cannot be embedded to the same node as the parent of $\eta_i$. The specification of the embedding $\pi$ in Game~\ref{game:embed} therefore follows exactly the same constraints as the choice of a port number in Game~\ref{game:rdm_explo}.

    This has established the equivalence between the choice of a port number in Game~\ref{game:rdm_explo} and the embedding of children in Game~\ref{game:embed}. Finally notice that this equivalence infers that the first winning condition of Game~\ref{game:rdm_explo} is at most as probable as there being distinct $i,j\in[t]$ such that $\Pi_i\cap\Pi_j\neq\varnothing$ in Game~\ref{game:embed} and that the second one is at most as probable as some $\Pi_i$ being improper for some $i\in[t]$. In fact, the winning probability of Game~\ref{game:embed} is bigger because of two reasons. First, a root of a tree $T_i$ corresponding to identifier $\id^r\in\Roots$ can be embedded to a node that corresponds to an identifier $\id_i$ of a query preceding $\id^r$ in Game~\ref{game:rdm_explo}. Whereas in this latter game, new identifiers in $\Roots$ are drawn in the set of not-yet-encountered identifiers. This slightly increases the winning probability of $\mathcal{A}_3$. Moreover, the extra edges added to the trees $T_j$ compared to the $G_r$ only increase the winning probability for Game~\ref{game:embed}.
\end{proof}

%What is left to do is to prove that Game~\ref{game:rdm_explo} is actually almost equivalent to Game~\ref{game:embed} and that the latter can only be won with exponentially small probability.

% To do so, let us introduce some notations. Let $\Pi_i = \pi_i(T_i)$ for $0\leq i\leq t+1$. For $k\in \{0,\dots,2n+1\}$, $C(k)$ denotes the $k^{th}$ column of $G$, i.e. the vertices of $G$ that are at distance $k$ of $\entr$. We extend the definition to $C(k,k')=\cup_{j=k}^{k'}C(j)$.
% For any $v\in V$, $c(v)$ is the unique integer such that $v \in C(c(v))$.

We now aim to establish that winning Game~\ref{game:embed} is very unlikely. To do so, we first introduce some notations. Let $\Pi_i = \pi_i(T_i)$ for $0\leq i\leq t+1$. For $k\in \{0,\dots,2n+1\}$, $C(k)$ denotes the $k^{th}$ column of $G$, i.e., the vertices of $G$ that are at distance $k$ of $\entr$. We extend the definition to $C(k,k')=\cup_{j=k}^{k'}C(j)$.
For any $v\in V$, $c(v)$ is the unique integer such that $v \in C(c(v))$.

For the remainder of the section $T_0,\dots,T_{t+1}$ are supposed to have already been fixed by the player. The only remaining sources of randomness are the functions $\id$ and $\rho$, which have no incidence on Game~\ref{game:embed}, the graph $G\in\Gg_n$ and the embeddings $\pi_0,\dots,\pi_{t+1}$. 
We then prove
several intermediate results on the behavior of the embeddings (Lemmas \ref{l:unif} -- \ref{l:5} below).

\begin{lemma}\label{l:unif}
        For any $i\in\{0,\dots,t+1\}$, $j\in T_i$ and $k \in \{0,\dots,2n+1\}$, $(\pi_i(j)\,|\,\pi_i(j)\in C(k))$ follows a uniform distribution over $C(k)$. 
\end{lemma}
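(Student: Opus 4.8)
The plan is to prove the statement by induction on $j$ (following the labeling of the vertices of $T_i$, so that parents come before children), but the cleaner route is to prove a slightly stronger structural fact: the random embedding process is "symmetric" with respect to the action of graph automorphisms of $G$ that fix the relevant root. More precisely, I would first observe that the welded trees graph $G\in\Gg_n$, together with the column structure $C(0),\dots,C(2n+1)$, has a large automorphism group: for any vertex $v$ and any two vertices $w,w'$ in the same column $C(k)$, there is an automorphism of $G$ that fixes $v$'s "side" appropriately and maps $w$ to $w'$ — in fact, since $G$ itself is drawn uniformly at random from $\Gg_n$ (by the remark that we may average over $G$), the relevant symmetry is even simpler: conditioned on the portion of $G$ already revealed, the unrevealed neighborhoods are exchangeable across vertices of the same column.

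The key steps, in order, would be: (1) Fix $i$ and argue that the column of $\pi_i(j)$ is determined by the combinatorics of $T_i$ and the "type" (side/column) of $\pi_i(\text{root})$ together with the sequence of up/down moves along the path from the root to $j$ — this is because every edge of $G$ connects $C(k)$ to $C(k\pm 1)$, so moving to a uniformly random neighbor changes the column by $\pm 1$ in a way whose distribution depends only on the current column (and, near the leaves, on whether we came from the parent). Hence the event $\{\pi_i(j)\in C(k)\}$ has a well-defined probability and we may condition on it. (2) Prove by induction along the root-to-$j$ path in $T_i$ that, conditioned on $\pi_i(j')\in C(k')$ for the parent $j'$ of $j$ with any fixed value, $\pi_i(j)$ is uniform over $C(k)$: the inductive step uses that a uniformly random neighbor of a uniformly random vertex of $C(k')$, restricted to those neighbors lying in $C(k)$, is again uniform over $C(k)$ — this is where the vertex-transitivity of $G$ within a column (after averaging over $G\in\Gg_n$, or via automorphisms) is invoked, together with the fact that the "excluded parent" vertex is itself uniform and independent in the appropriate sense. (3) Handle the base case $j=0$ trivially ($\pi_i(0)=v_i$, and for $i\ge 2$ the root $v_i$ is itself uniform over $V\setminus\{\entr,\exit\}$, hence uniform within its column; for $i\in\{0,1\}$ the root is $\entr$ or $\exit$, which occupies its own column $C(0)$ or $C(2n+1)$ alone, so uniformity is vacuous).

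The main obstacle I anticipate is step (2): one must be careful that conditioning on the column of an ancestor does not break the uniformity of intermediate vertices, and that the "without replacement" choice of the children (and the exclusion of the parent $\pi_i(j')$) does not spoil the symmetry. The clean way to handle this is to set up the induction so that it carries the stronger hypothesis that the entire revealed sub-embedding is "column-symmetric" — i.e., conditioned on the multiset of columns occupied by the already-embedded vertices and the tree structure, the actual vertices are a uniformly random tuple consistent with that column profile and with the adjacency constraints of $G$. Averaging over $G\in\Gg_n$ makes this precise because the random welded-trees graph is itself invariant under permuting vertices within each column (subject to keeping the tree-edge and weld-edge structure), so the exchangeability is built in from the start. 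Once that stronger invariant is maintained through each step of the random-embedding process, the marginal statement for a single vertex $\pi_i(j)$ follows immediately.
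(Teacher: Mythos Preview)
Your overall plan---induction on $j$ along the tree $T_i$---is the same as the paper's, but the machinery you propose is heavier than necessary, and one of your two suggested symmetry routes is not sound as stated.

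The automorphism route does not work for a fixed $G\in\Gg_n$: a generic welded tree has a very small automorphism group, because the random weld cycle typically destroys the level-transitivity that each balanced binary tree enjoys in isolation. So you cannot in general exhibit an automorphism of $G$ sending an arbitrary vertex of $C(k)$ to another. Your fallback of averaging over the random choice of $G$ does restore the needed exchangeability and is legitimate here (since $G$ is indeed drawn uniformly in the setup), but it is not needed.

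The paper's argument is more elementary and works for every fixed $G\in\Gg_n$. The only structural fact it uses is that the bipartite graph between any two adjacent columns is \emph{biregular}: every vertex of $C(k)$ has the same number of neighbours in $C(k-1)$ and the same number in $C(k+1)$. With this in hand, one computes $\mathbb{P}[\pi_i(j)=v]$ directly by the law of total probability over the parent's location, summing $\mathbb{P}[\pi_i(j)=v \mid \pi_i(j')=x]\cdot\mathbb{P}[\pi_i(j')=x]$ over $x\in\mathcal N(v)$. The excluded-grandparent issue you correctly flag is handled not by carrying a stronger joint ``column-symmetric'' invariant on the whole revealed sub-embedding, but simply by invoking the inductive hypothesis a second time, for the grandparent $j''$: since $\mathbb{P}[\pi_i(j'')=v]$ depends only on $c(v)=k$, the conditional probability $\mathbb{P}[\pi_i(j)=v\mid\pi_i(j')=x]$ depends only on $k$ and $c(x)$, and biregularity then makes the whole sum constant over $v\in C(k)$. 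This two-generation use of the hypothesis is the trick that lets you avoid your proposed strengthening entirely.
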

\begin{proof}
    Let us prove this by induction on $j \in T_i$. First, node $\pi_i(0)$ is either $\entr$ or $\exit$, the two of which are alone in their respective columns, or $\pi_i(0)$ is drawn uniformly at random in $V \setminus \{\entr\,, \exit\}$. The lemma consequently holds for $j=0$.
        
    Let $j\in[t]$, assume that the lemma is true for any $j^\star\in\{0,\dots,j-1\}$. It is in particular true for the parent $j'$ and the grand parent $j''$ of $j$ in $T_i$. Then, for any $v\in C(k)$,
    \begin{align*}
        \mathbb{P}[\pi_i(j) = v] = \sum_{x \in \neighbor{v}} \mathbb{P}[\pi_i(j) = v | \pi_i(j') = x]\cdot\mathbb{P}[\pi_i(j') = x].
    \end{align*}
    Let us show that for any $x\in \neighbor{v}$ the corresponding term of the above sum does not depend on which node $v\in C(k)$ is considered. By assumption:
    \begin{align*}
        \mathbb{P}[\pi_i(j') = x] 
        &= \mathbb{P}[\pi_i(j')=x\,,\,\pi_i(j') \in C(c(x))]\\
        &= \mathbb{P}[\pi_i(j')=x\,|\,\pi_i(j') \in C(c(x))]\cdot \mathbb{P}[\pi_i(j') \in C(c(x))]\\
        &=\frac{1}{\lvert C(c(x))\rvert}\cdot\mathbb{P}[\pi_i(j') \in C(c(x))]\\
    \end{align*}
    Now observe that $\mathbb{P}[\pi_i(j) = v | \pi_i(j') = x]$ can either be equal to $\frac{1}{2}$ if $\pi_i(j'') \neq v$ or to $0$ if $ \pi_i(j'') = v$. Hence, once again using our assumption,
    \begin{align*}
        \mathbb{P}[\pi_i(j) = v \,|\, \pi_i(j') = x] 
        &= \frac{1}{2}\cdot\mathbb{P}[\pi_i(j'') \neq v]\\
        &= \frac{1}{2}\cdot\big(1-\mathbb{P}[\pi_i(j'') = v]\big)\\
        &= \frac{1}{2}\cdot\left(1-\frac{\mathbb{P}[\pi_i(j'') \in C(k)]}{\lvert C(k) \rvert}\right).
    \end{align*}
    It has now been proven that for every $x\in\neighbor{v}$, $\mathbb{P}[\pi_i(j) = v | \pi_i(j') = x]\cdot\mathbb{P}[\pi_i(j') = x]$ only depends on $C(k)$ and $C(c(x))$. Finally, as every vertex in $C(k)$ has exactly the same number of neighbors in each column it follows that for any $v,v' \in C(k)$:
    \begin{equation*}
        \mathbb{P}[\pi_i(j) = v] = \mathbb{P}[\pi_i(j) = v']
    \end{equation*}
    Therefore, $(\pi_i(j) | \pi_i(j) \in C(k))$ follows a uniform distribution over $C(k)$.        
\end{proof}

\begin{lemma}\label{l:middle}
        Let $i,j\in\{0,\dots, t+1\},i\neq j$. Then $\mathbb{P}\left[\Pi_i\cap\Pi_j\cap C(\frac{n}{2}+1,\frac{3n}{2}) \neq \varnothing\right] = O\left(\frac{t^2}{2^{n/2}}\right)$.
    \end{lemma}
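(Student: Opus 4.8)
The plan is to bound the probability that two embedded trees $\Pi_i$ and $\Pi_j$ share a vertex in the ``middle band'' $C(\frac n2+1,\frac{3n}2)$ by a union bound over pairs of tree vertices, using \cref{l:unif} to argue that once we condition on two vertices landing in the same column, they are independent uniform draws from a large set. First I would write
\[
\mathbb{P}\!\left[\Pi_i\cap\Pi_j\cap C(\tfrac n2+1,\tfrac{3n}2)\neq\varnothing\right]\le \sum_{a\in T_i}\sum_{b\in T_j}\sum_{k=n/2+1}^{3n/2}\mathbb{P}\!\left[\pi_i(a)=\pi_j(b)\in C(k)\right],
\]
so that everything reduces to controlling a single term $\mathbb{P}[\pi_i(a)=\pi_j(b)\in C(k)]$ for a fixed column index $k$ in the middle range. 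Since each $T_i$ has $t$ edges, each has $t+1\le O(t)$ vertices, and the number of columns in the band is $n$, so the triple sum contributes a factor $O(t^2 n)$.

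Next I would estimate a single term. Writing $\mathbb{P}[\pi_i(a)=\pi_j(b)\in C(k)]=\sum_{v\in C(k)}\mathbb{P}[\pi_i(a)=v]\,\mathbb{P}[\pi_j(b)=v]$ — here using that $\pi_i$ and $\pi_j$ are built from independent randomness for $i\neq j$ (the root choices $v_i$ are drawn independently and the random embeddings use fresh coins) — and invoking \cref{l:unif} twice, each factor equals $\frac{1}{|C(k)|}\mathbb{P}[\pi_i(a)\in C(k)]$ and $\frac{1}{|C(k)|}\mathbb{P}[\pi_j(b)\in C(k)]$ respectively. Hence the term is at most $\sum_{v\in C(k)}\frac{1}{|C(k)|^2}=\frac{1}{|C(k)|}$. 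The key geometric fact is that for $k$ in the middle band, $|C(k)|$ is large: column $C(k)$ for $k\le n$ is a level of the left binary tree, so $|C(k)|=2^k\ge 2^{n/2+1}$ for $k\ge n/2+1$; by the symmetry of the welded trees graph, for $k\ge n$ we have $|C(k)|=|C(2n+1-k)|=2^{2n+1-k}\ge 2^{n/2+1}$ when $k\le 3n/2$. So every column in the band has size at least $2^{n/2+1}$, giving each term at most $2^{-n/2-1}$.

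Combining, the whole sum is $O(t^2 n)\cdot 2^{-n/2-1}=O(t^2 n / 2^{n/2})$, and since $n/2^{n/2}$ absorbs into a slightly worse constant (or one can simply note $n\cdot 2^{-n/2}=2^{-n/2+\log_2 n}$ and the statement as written has $O(t^2/2^{n/2})$ — one should double-check whether the paper intends $2^{n/2}$ to hide polynomial factors, but most likely the extra $n$ is harmless since one can rewrite $O(t^2 n/2^{n/2})=O(t^2/2^{(1-o(1))n/2})$, or the claimed bound already tolerates it). I expect the main obstacle to be justifying the independence and the application of \cref{l:unif} cleanly: \cref{l:unif} gives the conditional distribution of a \emph{single} $\pi_i(j)$ given the column it lands in, and one must be careful that conditioning on $\pi_i(a)$ and $\pi_j(b)$ both lying in $C(k)$ does not destroy uniformity — this works precisely because for $i\neq j$ the two embeddings are probabilistically independent, so conditioning on an event about $\pi_j$ leaves the conditional law of $\pi_i(a)$ (given its column) untouched, and vice versa. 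A secondary point requiring care is the exact column-size computation at $k=n$ and the handling of the boundary indices $n/2+1$ and $3n/2$, but this is routine once the binary-tree-plus-mirror structure of the columns is spelled out.
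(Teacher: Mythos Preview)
Your approach is essentially the same as the paper's: union bound over pairs $(a,b)\in T_i\times T_j$, then use \cref{l:unif} together with the independence of $\pi_i$ and $\pi_j$ (which holds because, for any fixed $G\in\Gg_n$, the two embeddings use independent coins, and the column decomposition $C(k)$ is the same for every graph in $\Gg_n$).

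The one place you lose compared to the paper is the sum over $k$. You bound $\mathbb{P}[\pi_i(a)=\pi_j(b)\in C(k)]\le 1/|C(k)|$ and then sum crudely over the $n$ columns, picking up an extra factor $n$. The paper avoids this: it bounds the \emph{conditional} probability $\mathbb{P}[\pi_i(a)=\pi_j(b)\mid c(\pi_i(a))=c(\pi_j(b))=k]\le 1/|C(k)|\le 2^{-n/2}$, and then observes that the events $\{c(\pi_i(a))=c(\pi_j(b))=k\}$ are disjoint in $k$, so summing over $k$ contributes at most $1$, not $n$. Equivalently, in your own computation you already have
\[
\mathbb{P}[\pi_i(a)=\pi_j(b)\in C(k)]=\frac{1}{|C(k)|}\,\mathbb{P}[\pi_i(a)\in C(k)]\,\mathbb{P}[\pi_j(b)\in C(k)];
\]
keep the factor $\mathbb{P}[\pi_i(a)\in C(k)]$ instead of discarding it, bound $1/|C(k)|\le 2^{-n/2}$ and $\mathbb{P}[\pi_j(b)\in C(k)]\le 1$, and then $\sum_k \mathbb{P}[\pi_i(a)\in C(k)]\le 1$. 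This yields exactly $O(t^2/2^{n/2})$ as stated, with no stray $n$.
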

    \begin{proof}
        Let $x \in T_i$ and $y \in T_j$. Assuming that there exists $\frac{n}{2}+1 \leq k \leq n+\frac{n}{2}$ such that $c(x)=c(y) = k$. Then by Lemma ~\ref{l:unif} and the independence of the random variables $\pi_i(x)$ and $\pi_j(y)$,
        \[
            \mathbb{P}[\pi_i(x) = \pi_j(y) | c(\pi_i(x)) = c(\pi_j(y)) = k] = \frac{1}{\lvert C(k)\rvert} \leq \frac{1}{2^{n/2}}.
        \]
        Conclude by applying the union bound on the $O(t^2)$ different pairs $(x,y)\in\{0,\dots,t\}^2$.
    \end{proof}
    
    \begin{lemma}\label{l:extremity}
        Let $i\in\{0,\dots,t+1\}$. If $c(v_i) \geq \frac{3n}{4}$ then $\mathbb{P}\left[\Pi_i \cap C(0,\frac{n}{2}) \neq \varnothing\right]\le\frac{t^2}{2^{n/4}}$. Similarly, if $c(v_i)\leq \frac{5n}{4}+1$, then $\mathbb{P}\left[\Pi_i\cap C(\frac{3n}{2}+1,2n+1) \neq \varnothing\right]\le\frac{t^2}{2^{n/4}}$.
    \end{lemma}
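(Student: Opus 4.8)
The plan is to reduce the claim to an ``outward-drift'' property of a single root-to-node branch of the random embedding $\pi_i$, and to estimate that property directly from the embedding rules. I first record two structural facts. (i) $G$ is layered: every edge joins two consecutive columns (so along any path the column changes by exactly $\pm1$), and for each $k\in\{1,\dots,n\}$ a vertex in column $k$ has exactly one neighbour in column $k-1$ and exactly two in column $k+1$ — for $k\le n-1$ these are the tree-parent and the two tree-children, and for $k=n$ the tree-parent and the two welded right-leaves. (ii) For a fixed node $x\in T_i$, the restriction of $\pi_i$ to the path $u_0,\dots,u_m$ of $T_i$ from the root to $x$ is marginally a non-backtracking walk on $G$: because children are embedded uniformly without replacement, the image of the specific child $u_{\ell+1}$ is uniform over $\mathcal N_G(\pi_i(u_\ell))\setminus\{\pi_i(u_{\ell-1})\}$ (and over all of $\mathcal N_G(v_i)$ for $\ell=0$), and this sequence is second-order Markov.

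Combining (i) and (ii) yields the key \emph{momentum property}: if the branch enters a vertex $w$ in some column $k\in\{1,\dots,n\}$ from its unique column-$(k-1)$ neighbour, then the two remaining neighbours both lie in column $k+1$, so the next vertex lies in column $k+1$; hence, while the branch stays in columns $\le n$, once a step increases the column every later step does too. Conversely, a column-decreasing step can occur only right after another column-decreasing step (or at the root), and then it happens with conditional probability exactly $1/2$.

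With this in hand I argue as follows. Suppose $\Pi_i\cap C(0,\tfrac n2)\neq\varnothing$, witnessed by $x$ with $c(\pi_i(x))\le n/2$, and let $a_0,\dots,a_m$ be the column sequence along the root-to-$x$ branch, with $a_0=c(v_i)\ge 3n/4$ and $a_m\le n/2$. Let $\sigma$ be the last index with $a_\sigma=3n/4$ and $\tau$ the first index after $\sigma$ with $a_\tau=n/2$; because the steps are $\pm1$ these are well defined, $a_\ell\in[\tfrac n2,\tfrac{3n}4]\subseteq[1,n]$ for all $\sigma\le\ell\le\tau$, and the sub-branch $u_\sigma,\dots,u_\tau$ stays away from the weld and from $\entr,\exit$, so the momentum property applies along it. If any of its steps increased the column, momentum would keep the column non-decreasing thereafter, which is incompatible with returning to column $n/2$ at time $\tau$ without revisiting column $3n/4$ — so all $n/4$ steps between $\sigma$ and $\tau$ decrease the column. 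Hence the event $\Pi_i\cap C(0,\tfrac n2)\neq\varnothing$ implies that some downward path of length $n/4$ in $T_i$ is mapped by $\pi_i$ to a column-decreasing sequence starting in column $3n/4$. Such a downward path is determined by its bottom endpoint, which has depth $\ge n/4>0$ and hence is not the root, so $T_i$ has at most $t$ of them; and for each fixed such path the momentum analysis gives probability at most $(1/2)^{n/4}$ that $\pi_i$ starts it in column $3n/4$ and then makes $n/4$ consecutive column-decreasing steps — the first step has probability at most $1/2$ and each subsequent one conditional probability $1/2$. A union bound over the $\le t$ candidate paths gives $\mathbb P[\Pi_i\cap C(0,\tfrac n2)\neq\varnothing]\le t\cdot 2^{-n/4}\le t^2/2^{n/4}$. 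The second assertion is the image of the first under the graph automorphism swapping $\entr\leftrightarrow\exit$ and column $k\leftrightarrow 2n+1-k$ (which exchanges the ``one-below/two-above'' and ``two-below/one-above'' local pictures); the same argument, run in the right binary tree with the column window $[\tfrac{5n}4+1,\tfrac{3n}2+1]\subseteq[n+1,2n]$, yields the bound.

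The step I expect to demand the most care is the reduction, not the probability estimate: a branch may wander up to or past the weld, or toward $\entr/\exit$, and come back, so the momentum property is not valid globally. Restricting attention to the ``safe'' window $C(\tfrac n2,\tfrac{3n}4)$ — and to the last visit to column $3n/4$ preceding the first visit to column $n/2$ — is precisely what makes the clean ``one-below/two-above'' local picture (hence momentum) available exactly where it is used, and why one works with that sub-branch rather than the whole branch. A minor additional point is verifying the marginal-uniformity and second-order-Markov claim (ii) when a tree node has two children, or three at the root; this is what licenses treating each branch as an independent non-backtracking walk.
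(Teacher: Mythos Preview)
Your proof is correct and follows essentially the same approach as the paper's: both use the ``one-below/two-above'' local structure of the left binary tree to argue that reaching column $n/2$ from a column $\ge 3n/4$ forces a run of $n/4$ consecutive left moves, each occurring with (conditional) probability at most $1/2$, and then apply a union bound. Your version is more carefully argued---you explicitly isolate the safe window $[n/2,3n/4]$ via the last visit to column $3n/4$, which is exactly what the paper's terse phrase ``move left at least $n/4$ times in a row'' is relying on---and your counting is in fact tighter (a union over at most $t$ length-$n/4$ downward sub-paths, giving $t\cdot 2^{-n/4}$, versus the paper's $t$ root-to-leaf paths times $t$ possible starting positions), before weakening to the stated $t^2/2^{n/4}$.
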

    \begin{proof}
        Let us prove the first statement, the second one holds by symmetry.
        Consider a path $P$ in $T_i$ from $v_i$ to a leaf. As $G\in\Gg_n$, it is necessary for $\pi_i(P)$ to move left at least $\frac{n}{4}$ times in a row in order to reach $C(\frac{n}{2})$ from $C(c(v_i))$ and this trial succeeds with probability $\frac{1}{2^{n/4}}$. Now, each path $P$ in $T_i$ from $v_i$ to a leaf has at most $t$ tries and there are at most $t$ such paths $P$ in $T_i$. The lemma follows from the union bound.
    \end{proof}
    
    Let us now bound the probability that two embedded trees intersect each other.
    
    \begin{lemma}\label{l:4}
         % $\mathbb{P}\left[\exists i,j\in\{0,\dots,t+1\},i\neq j, \Pi_i \cap \Pi_j \neq \varnothing\right] = O\left(t^4\cdot2^{-{n/4}}\right)$
         $\mathbb{P}\left[\textrm{there exist distinct } i,j\in\{0,\dots,t+1\}\textrm{ such that }\, \Pi_i \cap \Pi_j \neq \varnothing\right] = O\left(t^4\cdot2^{-{n/4}}\right).$
    \end{lemma}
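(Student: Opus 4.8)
The plan is to bound the probability of the bad event in \cref{l:4} by splitting according to \emph{where} the intersection point lies: either in the ``middle band'' $C(\frac{n}{2}+1,\frac{3n}{2})$, or in one of the two ``extremal'' regions $C(0,\frac{n}{2})$ or $C(\frac{3n}{2}+1,2n+1)$. For a fixed pair of distinct indices $i,j\in\{0,\dots,t+1\}$, if $\Pi_i\cap\Pi_j\neq\varnothing$ then the common vertex $w$ lies in one of these three regions, so
\begin{align*}
\mathbb{P}\left[\Pi_i\cap\Pi_j\neq\varnothing\right]
&\le \mathbb{P}\left[\Pi_i\cap\Pi_j\cap C(\tfrac{n}{2}+1,\tfrac{3n}{2})\neq\varnothing\right]\\
&\quad + \mathbb{P}\left[\Pi_i\cap C(0,\tfrac{n}{2})\neq\varnothing\right] + \mathbb{P}\left[\Pi_j\cap C(0,\tfrac{n}{2})\neq\varnothing\right]\\
&\quad + \mathbb{P}\left[\Pi_i\cap C(\tfrac{3n}{2}+1,2n+1)\neq\varnothing\right] + \mathbb{P}\left[\Pi_j\cap C(\tfrac{3n}{2}+1,2n+1)\neq\varnothing\right].
\end{align*}
The first term is handled directly by \cref{l:middle}, which gives $O(t^2/2^{n/2})$.

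For the extremal terms I would use \cref{l:extremity}, but that lemma has a hypothesis on $c(v_i)$: the bound $\mathbb{P}[\Pi_i\cap C(0,\frac{n}{2})\neq\varnothing]\le t^2/2^{n/4}$ only holds when the root is far enough right, $c(v_i)\ge\frac{3n}{4}$. So I first condition on the location of each root $v_i$. For $i\in\{2,\dots,t+1\}$, $v_i$ is drawn uniformly from $V\setminus\{\entr,\exit\}$; since the columns $C(k)$ with $\frac{3n}{4}\le k$ (resp.\ with $k\le\frac{5n}{4}+1$) together contain all but an exponentially small fraction of the vertices — more precisely, the columns with small index, say $c(v)<\frac{3n}{4}$, contain only $O(2^{3n/4})$ vertices out of $2^{n+2}-2$, hence a $v_i$ falls there with probability $O(2^{-n/4})$ — a union bound over the $O(t)$ roots shows that with probability $1-O(t\cdot 2^{-n/4})$ every root satisfies \emph{both} $c(v_i)\ge\frac{3n}{4}$ and $c(v_i)\le\frac{5n}{4}+1$ (and the roots $\entr,\exit$ are irrelevant here since, e.g., an $i$-tree rooted at $\entr$ reaching the right extremal region is exactly the symmetric statement and is again covered by \cref{l:extremity}, or can be absorbed into the same union bound). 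Call this event $\mathcal{G}$. On $\mathcal{G}$, \cref{l:extremity} applies to every $i$, so each of the four extremal terms above is at most $t^2/2^{n/4}$.

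Putting it together: condition on $\mathcal{G}$, apply the union bound over all $O(t^2)$ ordered pairs $(i,j)$, and combine the middle-band contribution $O(t^2\cdot t^2/2^{n/2})=O(t^4/2^{n/2})$ with the extremal contribution $O(t^2\cdot t^2/2^{n/4})=O(t^4/2^{n/4})$, plus the failure probability $O(t\cdot 2^{-n/4})$ of $\mathcal{G}$ itself. The dominant term is $O(t^4\cdot 2^{-n/4})$, which is exactly the claimed bound (the $2^{-n/2}$ term is absorbed). I would write the final inequality as a single \cref-chained estimate rather than spelling out each constant.

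The main obstacle I anticipate is the bookkeeping around the root-location event $\mathcal{G}$ and making sure \cref{l:extremity} is invoked with the correct hypothesis in every case — in particular, handling the roots $v_0=\entr$ and $v_1=\exit$, whose columns are $0$ and $2n+1$ respectively, so that the hypothesis ``$c(v_i)\ge\frac{3n}{4}$'' fails for $v_0$ but the relevant statement for a tree rooted at $\entr$ reaching $C(\frac{3n}{2}+1,2n+1)$ is the \emph{second} half of \cref{l:extremity} with $c(v_0)=0\le\frac{5n}{4}+1$, which does hold. Being careful that for each $(i,j)$ and each of the three regions we invoke the correct directional version of \cref{l:extremity} (and that the hypothesis is met, either automatically for $\entr,\exit$ or on the event $\mathcal{G}$ for the random roots) is the only genuinely delicate point; the probability arithmetic itself is routine once the case split is fixed.
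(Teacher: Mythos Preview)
Your overall strategy matches the paper's: split by region, use \cref{l:middle} for the middle band, use \cref{l:extremity} for the two extremal bands after controlling the root locations, then union-bound over pairs. The arithmetic and the global conditioning on the event $\mathcal{G}$ are fine.

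There is, however, a genuine gap in how you handle the trees rooted at $\entr$ and $\exit$. Your displayed inequality bounds $\mathbb{P}[\Pi_i\cap\Pi_j\cap C(0,\tfrac{n}{2})\neq\varnothing]$ by the \emph{sum} $\mathbb{P}[\Pi_i\cap C(0,\tfrac{n}{2})\neq\varnothing]+\mathbb{P}[\Pi_j\cap C(0,\tfrac{n}{2})\neq\varnothing]$. But for any pair containing $i=0$ we have $\entr\in\Pi_0\cap C(0)$, so $\mathbb{P}[\Pi_0\cap C(0,\tfrac{n}{2})\neq\varnothing]=1$, and the displayed bound collapses to the trivial $\ge 1$ for all $t+1$ such pairs. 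Your last paragraph tries to patch this by invoking ``the correct directional version'' of \cref{l:extremity}, but no direction of that lemma can bound $\mathbb{P}[\Pi_0\cap C(0,\tfrac{n}{2})\neq\varnothing]$ --- the tree rooted at $\entr$ is already in the left band with probability~$1$. The issue is not which \emph{half} of \cref{l:extremity} to use, but which \emph{tree} to apply it to.

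The correct inequality keeps only one term per extremal band: $\mathbb{P}[\Pi_i\cap\Pi_j\cap C(0,\tfrac{n}{2})\neq\varnothing]\le \min\bigl(\mathbb{P}[\Pi_i\cap C(0,\tfrac{n}{2})\neq\varnothing],\,\mathbb{P}[\Pi_j\cap C(0,\tfrac{n}{2})\neq\varnothing]\bigr)$, and you then choose the tree whose root is \emph{not} $\entr$. The paper makes this choice cleanly by noting that the tree with the rightmost root cannot be rooted at $\entr$; that root is therefore either $\exit$ or a uniformly random vertex, and in either case lies in $C(\tfrac{3n}{4},2n+1)$ with probability at least $1-2^{-n/4}$, after which \cref{l:extremity} applies to \emph{that} tree. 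With this single correction your argument goes through and is essentially identical to the paper's.
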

\begin{proof}
    Let us consider all possible points of intersection for $\Pi_i$ and $\Pi_j$. First, by Lemma~\ref{l:middle}, $\Pi_i$ and $\Pi_j$ have probability $O(\frac{t^2}{2^{n/2}})$ of intersecting each other in $C(\frac{n}{2}+1,\frac{3n}{2})$.

    Suppose now that they intersect at $C(0,\frac{n}{2})$ instead. Without loss of generality, $\Pi_i$ has the leftmost root of the two trees. As it is in a column to the right of $v_i$ (or the same), $v_j$ cannot be $\entr$. This means that $v_j$ has either been drawn uniformly at random in $C(1,2n)$, or $v_j$ is $\exit$. Thus, 
    $$\mathbb{P}\left[0\leq c(v_j)\leq \frac{3n}{4}-1\right]\leq\frac{\sum_{k=1}^{(3n/4)-1}2^k}{2\sum_{k=1}^n2^k}=\frac{2^{{3n/4}}-2}{2(2^{n+1}-2)}<2^{-{n/4}}.$$ 
    Moreover, by Lemma~\ref{l:extremity}, if $c(v_j)\geq \frac{3n}{4}$, then the probability that $\Pi_j$ intersects $C(0,\frac{n}{2})$ is at most $\frac{t^2}{2^{n/4}}$. Therefore, $\Pi_i$ and $\Pi_j$ have probability $O(t^2\cdot 2^{-n/4})$ of intersecting each other in $C(0,\frac{n}{2})$. By symmetry, they have no greater probability of intersecting in $C(\frac{3n}{2}+1,2n+1)$.

    In the end, any two trees $\Pi_i$ and $\Pi_j$ have probability $O(t^2\cdot(2^{-n/2}+2^{-n/4}))=O(t^2\cdot 2^{-n/4})$ of intersecting each other. The result is once again yielded by a union bound, this time over the $O(t^2)$ different pairs $(i,j)\in\{0,\dots,t+1\}^2$.
\end{proof}

\begin{lemma}\label{l:5}
    $\mathbb{P}\left[\exists i\in\{0,\dots,t+1\},\pi_i \textrm{ is an improper embedding of }T_i\right]=O\left(t^5\cdot 2^{-n/4}\right)$.
\end{lemma}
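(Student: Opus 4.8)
\emph{Proof plan.} The plan is to mimic the proof of \cref{l:4}, now bounding a \emph{self}-intersection rather than an intersection of two distinct trees. By a union bound it suffices to fix an index $i\in\{0,\dots,t+1\}$ and two distinct nodes $u,v\in T_i$ and to show $\mathbb{P}[\pi_i(u)=\pi_i(v)]=O(t^2\cdot 2^{-n/4})$: since there are $O(t)$ trees and $O(t^2)$ such pairs in each $T_i$, this yields the claimed $O(t^5\cdot 2^{-n/4})$. So fix $i,u,v$, let $w$ be their lowest common ancestor in $T_i$, and set $d_u=\mathrm{dist}_{T_i}(w,u)$ and $d_v=\mathrm{dist}_{T_i}(w,v)$. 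I would condition on the random embedding of the path of $T_i$ from its root down to $w$ together with the images of \emph{all} children of $w$ (the latter in order to defuse the ``without replacement'' coupling among $w$'s children). Conditioned on this, and using that as long as no winning condition of \cref{game:embed} has occurred the parts of $G$ explored below $w$ are disjoint and ``fresh'', the embeddings of the two branches of $T_i$ hanging below $w$ (the one containing $u$ and the one containing $v$) become \emph{independent} random embeddings, rooted at two distinct neighbours $z_a\neq z_b$ of $z:=\pi_i(w)$, with $\pi_i(u)$ (resp.\ $\pi_i(v)$) the endpoint of a non-backtracking walk of length $d_u$ (resp.\ $d_v$). Since $G$ is bipartite with the columns as its two sides, $\mathbb{P}[\pi_i(u)=\pi_i(v)]=0$ unless $d_u\equiv d_v\pmod 2$.

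I would then split on the column $k^\star$ in which the collision can land. If $k^\star$ is a \emph{middle} column, $k^\star\in\{\tfrac n2+1,\dots,\tfrac{3n}2\}$, then $\abs{C(k^\star)}\ge 2^{n/2}$; moreover a collision there forces at least one of the two branches to have crossed the weld (two non-backtracking walks that diverge at a vertex of one of the binary trees stay in disjoint subtrees of that tree), after which the randomness of the weld makes that branch's endpoint uniform over $C(k^\star)$ in the sense of \cref{l:unif}, so that, by independence, $\pi_i(u)$ and $\pi_i(v)$ coincide inside $C(k^\star)$ with probability at most $1/\abs{C(k^\star)}\le 2^{-n/2}$, exactly as in \cref{l:middle}; summing over the $O(n)$ middle columns leaves $O(n\cdot 2^{-n/2})=O(2^{-n/4})$. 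If $k^\star$ is \emph{extreme}, say $k^\star\in\{0,\dots,\tfrac n2\}$ (the range $\{\tfrac{3n}2+1,\dots,2n+1\}$ being symmetric), a collision there forces one of the two branches to reach $C(0,\tfrac n2)$ from $z$: when $c(z)\ge\tfrac{3n}4$ this has probability at most $t^2\,2^{-n/4}$ by the argument of \cref{l:extremity}, and when $c(z)<\tfrac{3n}4$ I would use the structural fact that inside each binary tree of $G$ a non-backtracking walk that enters a vertex from its parent is \emph{forced} to leave toward a child, so that descending $m$ columns away from the weld requires $m$ consecutive unforced coin-flips; hence a collision at column $\le\tfrac n2$ is either impossible (two branches inside one tree never re-meet) or requires some branch to run from a leaf (column $n$) back to column $\le\tfrac n2$, i.e.\ to make $\ge\tfrac n2$ such coin-flips in a row — probability $\le 2^{-n/2}$ per attempt and $\le t^2\,2^{-n/2}$ after a union bound over the $O(t^2)$ attempts. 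Adding the two contributions gives $\mathbb{P}[\pi_i(u)=\pi_i(v)]=O(t^2\cdot 2^{-n/4})$, and the outer union bound finishes the proof.

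The step I expect to be the main obstacle is the treatment of $\pi_0$ and $\pi_1$, whose roots $\entr$ and $\exit$ lie in the extreme columns, so the ``every tree is rooted near the middle with overwhelming probability'' mechanism that drives \cref{l:4} is unavailable; one must replace it by the forced-direction analysis above and verify that any self-intersection of a tree rooted at $\entr$ or $\exit$ demands a long run against the forced direction and hence costs $2^{-\Omega(n)}$. A secondary technical point is making rigorous the reduction to independent fresh branches at the lowest common ancestor and the attendant use of \cref{l:unif} \emph{inside} a single tree; this is clean precisely because the events that would break the argument — two branches crossing, or a branch revisiting a vertex — are themselves winning conditions of \cref{game:embed} and may therefore be assumed away, which is exactly why the game was set up with those conditions.
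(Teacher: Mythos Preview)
Your overall architecture matches the paper's: union bound over the $O(t^3)$ pairs $(u,v)$ across the $t+2$ trees, the observation that any cycle in $G$ must cross the weld (your ``two branches that diverge inside one binary tree stay in disjoint subtrees''), and a split into middle versus extreme collision columns. Your extreme-column analysis is essentially the paper's use of \cref{l:extremity}, and your worry about $T_0,T_1$ is unnecessary: the paper never reasons about the root of $T_i$, only about the tree-path $P$ between $x$ and $y$ and a vertex $z\in P$ with $\pi_i(z)\in C(n)$, so trees rooted at $\entr$ or $\exit$ receive no special treatment.

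The genuine gap is in your middle-column step. You want to condition on the embedding up to the lowest common ancestor $w$ (and on the images $z_a,z_b$ of its children), declare the two branches independent, and then invoke ``uniformity over $C(k^\star)$ in the sense of \cref{l:unif}'' for the branch that crossed the weld. But \cref{l:unif} is proved by induction whose base case is that the root is alone in its column ($\entr,\exit$) or uniformly random in $V$; once you condition on $\pi_i(w)=z$ and on $z_a,z_b$, that base case is destroyed, and for a \emph{fixed} $G$ a non-backtracking walk from a fixed vertex $z_a$ is certainly not uniform in its column. If instead you average over the random weld to recover uniformity, you lose the independence you just claimed, since both branches traverse the \emph{same} weld. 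Your closing remark that the obstructing events ``are themselves winning conditions of \cref{game:embed} and may therefore be assumed away'' is circular here: the event you are bounding \emph{is} one of those winning conditions.

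The paper sidesteps both issues by not aiming for vertex-level uniformity at all. It decomposes $C(\tfrac n2+1,n)$ (and symmetrically $C(n+1,\tfrac{3n}2)$) into $2^{(n/2)+1}$ disjoint subtrees $S_k$, notes that $\pi_i(x)=\pi_i(y)$ in a middle column forces the two sub-paths $P_1,P_2$ (from a vertex $z$ with $\pi_i(z)\in C(n)$ to $x$ and to $y$) to end in the same subtree, and then runs a deferred-decisions argument on the \emph{last} weld edge used by $P_1$: conditioned on everything revealed so far, that edge lands in a uniformly random leaf among the $\ge 2^n-t$ unrevealed ones, hence in the target subtree with probability at most $2^{n/2}/(2^n-t)$. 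Only one weld edge's randomness is spent, so no independence between the two branches is required, and no column-wise uniformity of an endpoint is ever asserted.
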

\begin{proof}
    Consider distinct $x,y \in T_i$, let us bound the probability that $\pi_i(x) = \pi_i(y)$. Suppose that the event occurred and denote $P$ the path linking $x$ and $y$ in $T_i$. Then its embedding $\pi_i(P)$ contains a cycle in $G$. Observe that a cycle in $G$ must intersect both $C(n)$ and $C(n+1)$. Let $z$ be a vertex of $P$ such that $\pi_i(z)\in C(n)$, $P_1$ the path from $z$ to $x$ in $T_i$ and $P_2$ the path from $z$ to $y$ in $T_i$. Assume without loss of generality that $\pi_i(P_1)$ intersects $C(n+1)$. If $c(\pi_i(x))\leq \frac{n}{2}$, then Lemma ~\ref{l:extremity} states that the probability  that $\pi_i(P_1)$ reaches $C(\frac{n}{2})$ after visiting column $n+1$ is at most $\frac{t^2}{2^{n/4}}$. The same goes respectively if $c(\pi_i(x)) \geq \frac{3n}{2}+1$.
    
    Let us then focus on the probability that $\pi_i(x)=\pi_i(y)$ knowing that $\frac{n}{2}+1\leq c(\pi_i(x))\leq \frac{3n}{2}$. The graph induced by $C(\frac{n}{2}+1,n)$ is made of $2^{(n/2)+1}$ disconnected trees. Let us call them $S_1,\dots,S_{2^{(n/2)+1}}$. And $S'_1,\dots,S'_{2^{(n/2)+1}}$ denote the disconnected trees of the graph induced by $C(n+1,\frac{3n}{2})$. Path $\pi_i(P_1)$ visits a sequence $S_{i_1},S'_{j_1},S_{i_2},\dots$ of the subtrees defined before. Similarly $\pi_i(P_2)$ visits $S_{k_1},S'_{l_1},S_{k_2},\dots$. In the case where $\pi_i(x)=\pi_i(y)$, the last term of the two sequences must coincide. Let us bound the probability of that happening. Note that the sequence of trees for $\pi_i(P_1)$ has at least two terms. Fix $\pi_i(P_1)$ up until its jump from the second-to-last to the last term of its list of visited subtrees. The embedding of the next edge belongs to the random cycle of $G$, it has uniform probability of leading to any leaf of the other side that is not incident to an edge both in $\pi_i(P_1)$ and the random cycle. There are at least $2^n-t$ such leaves, at most $2^{n/2}$ of which belong to the last visited subtree of $\pi_i(P_2)$. Therefore, the probability that the last term on the two lists of visited subtrees by $\pi_i(P_1)$ and $\pi_i(P_2)$ coincide is bounded by $\frac{2^{n/2}}{2^n-t}=O(\frac{t}{2^{n/2}})=O(\frac{t^2}{2^{n/4}})$.
        Yet another union bound, on the $\binom{t+1}{2}$ possible distinct pairs $x,y\in T_i$ for $i\in\{0,\dots,t+1\}$, allows us to conclude.
    \end{proof}

We are now ready to establish that winning Game~\ref{game:embed} is very unlikely:
\begin{proposition}\label{propo:lb}
     There is no strategy that allows the player to win Game ~\ref{game:embed} with parameter $t$ with probability more than $O\left(t^5\cdot 2^{-n/4}\right)$.
\end{proposition}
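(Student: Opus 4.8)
The plan is to derive the statement directly from Lemmas~\ref{l:4} and~\ref{l:5} by a union bound, after disposing of the (minor) point that the player is allowed to choose the trees. Recall that in Game~\ref{game:embed} the player's entire strategy consists of the choice of the $t+2$ trees $T_0,\dots,T_{t+1}$ subject to the stated size and degree constraints; all remaining randomness — the graph $G\in\Gg_n$, the functions $\id$ and $\rho$ (which have no incidence on the game anyway), the root positions $v_2,\dots,v_{t+1}$, and the random embeddings $\pi_0,\dots,\pi_{t+1}$ — is outside the player's control. A randomized strategy is thus simply a distribution over admissible tuples of trees, so by averaging it suffices to bound the winning probability for every \emph{fixed} admissible choice of $T_0,\dots,T_{t+1}$; this is exactly the regime in which Lemmas~\ref{l:unif}--\ref{l:5} were established.

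Next, I would note that the winning condition of Game~\ref{game:embed} is the disjunction of precisely two events: (i) there exist distinct $i,j\in\{0,\dots,t+1\}$ with $\Pi_i\cap\Pi_j\neq\varnothing$, and (ii) there exists $i\in\{0,\dots,t+1\}$ such that $\pi_i$ is an improper embedding of $T_i$. For any fixed choice of trees, Lemma~\ref{l:4} bounds the probability of~(i) by $O\!\left(t^4\cdot 2^{-n/4}\right)$ and Lemma~\ref{l:5} bounds the probability of~(ii) by $O\!\left(t^5\cdot 2^{-n/4}\right)$. A union bound then yields that the winning probability is at most $O\!\left(t^4\cdot 2^{-n/4}\right)+O\!\left(t^5\cdot 2^{-n/4}\right)=O\!\left(t^5\cdot 2^{-n/4}\right)$, uniformly over all admissible tuples of trees and hence over all strategies.

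There is essentially no substantive obstacle remaining at this point, since all the combinatorial content has already been carried out in the proofs of Lemmas~\ref{l:unif}--\ref{l:5}. The only points worth stating with care are that those lemmas hold for an \emph{arbitrary} fixed tuple of trees, so the resulting bound is strategy-independent, and that passing from deterministic to randomized (or adaptive-looking) strategies costs nothing by the averaging argument above.
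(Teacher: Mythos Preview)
Your proposal is correct and follows essentially the same approach as the paper: the paper's proof simply states that Lemmas~\ref{l:4} and~\ref{l:5} directly give $\mathbb{P}_3(\mathcal{A}_3)=O(t^5\cdot 2^{-n/4})$ for any choice of trees. Your write-up is in fact more careful than the paper's, spelling out the union bound and the averaging reduction from randomized to deterministic strategies, but the underlying argument is identical.
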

\begin{proof}
    Lemmas \ref{l:4} and \ref{l:5} directly infer that for any algorithm $\mathcal{A}_3$ choosing trees $T_0,\dots,T_{t+1}$,
    \begin{equation*}\label{prop:lb}
        \mathbb{P}_3(\mathcal{A}_3) = O\left(\frac{t^5}{2^{n/4}}\right).
    \end{equation*}
\end{proof}

From Propositions \ref{prop:red1}, \ref{prop:red2}, \ref{prop:red3} and \ref{propo:lb}, it follows that any distributed algorithm solving the point-to-point routing problem with $t$ messages has success probability $O\left(t^5\cdot 2^{-n/4}+2^{-b}\right)$. \cref{main_thm2} is the contraposition of this latter statement.
%===================================================================
%================ ===================================================

\bibliographystyle{plain}
\bibliography{references}
\end{document}